\newtheorem{proposition}{Proposition}
\newcounter{mytempeqcounter}
\newcommand{\qe}{{\bf e}}
\newcommand{\qg}{{\bf g}}
\newcommand{\qh}{{\bf h}}
\newcommand{\qn}{{\bf n}}
\newcommand{\qw}{{\bf w}}
\newcommand{\qx}{{\bf x}}
\newcommand{\qy}{{\bf y}}
\newcommand{\qz}{{\bf z}}
\newcommand{\qB}{{\bf B}}
\newcommand{\qH}{{\bf H}}
\newcommand{\qI}{{\bf I}}
\newcommand{\PMRT}{\mathtt{PMRT}}
\newcommand{\PZF}{\mathtt{PZF}}
\newcommand{\SINRE}{\mathrm{SINR}_{E}}
\newcommand{\SINRone}{\mathrm{SINR}_{1}}
\newcommand{\SINRk}{\mathrm{SINR}_{k}}
\newcommand{\Mone}{\mathcal{M}_{1}}
\newcommand{\Zone}{\mathcal{Z}_{1}}
\newcommand{\Zt}{\mathcal{Z}_{t}}
\newcommand{\Mt}{\mathcal{M}_{t}}
\newcommand{\Betalk}{\beta_{l,k}}
\newcommand{\betalE}{\beta_{l,E}}
\newcommand{\BUE}{\mathrm{BU}_{E,1}} 
\newcommand{\UIEt}{\mathrm{UI}_{E,t}}
\newcommand{\BphiE}{\boldsymbol{\phi}_E}
\newcommand{\Bphik}{\boldsymbol{\phi}_k}
\newcommand{\Bphikp}{\boldsymbol{\phi}_{k'}}
\newcommand{\Bphione}{\boldsymbol{\phi}_1}
\newcommand{\Bvarepsilon}{\boldsymbol{\pi}}
\newcommand{\betalk}{\beta_{l,k}}
\newcommand{\betalone}{\beta_{l,1}}
\newcommand{\hlmk}{\qh_{l,k}}
\newcommand{\glmk}{\qg_{l,k}}
\newcommand{\hlmE}{\qh_{l,E}}
\newcommand{\glmE}{\qg_{l,E}}
\newcommand{\hhatlmk}{\hat {\qh}_{l,k}}
\newcommand{\gamalmk}{\gamma_{l, k}}
\newcommand{\gamalmE}{\gamma_{l, E}}
\newcommand{\gamapE}{\gamma_{p,E}}
\newcommand{\rhopk}{\rho_{p, k}}
\newcommand{\rholk}{\rho_{l, k}}
\newcommand{\rhopt}{\rho_{p, t}}
\newcommand{\rholt}{\rho_{l, t}}
\newcommand{\tausl}{|\mathcal{S}_{l}|}
\newcommand{\tausp}{|\mathcal{S}_{p}|}
\newcommand{\tausq}{|\mathcal{S}_{q}|}
\newcommand{\Sl}{\mathcal{S}_{l}}
\newcommand{\Ex}{\mathbb{E}}
\newcommand{\trace}{\mathrm{tr}}
\newtheorem{remark}{Remark}
\title{Secure Transmission in Cell-Free Massive MIMO under Active Eavesdropping}
\author{Yasseen Sadoon Atiya, Zahra Mobini,~\IEEEmembership{Member,~IEEE,} Hien Quoc Ngo,~\IEEEmembership{Senior Member,~IEEE,}\\ and Michail Matthaiou,~\IEEEmembership{Fellow,~IEEE}}
\begin{document}
	
\bstctlcite{IEEEexample:BSTcontrol}
\maketitle
\begin{abstract}

We study  secure communications in cell-free massive multiple-input multiple-output (CF-mMIMO) systems with multi-antenna access points (APs) and  protective partial zero-forcing (PPZF) precoding. In particular, we consider an active eavesdropping attack,  where an eavesdropper  contaminates the uplink channel estimation phase by sending an identical pilot sequence with a legitimate user of interest.  
We formulate an optimization problem for maximizing the received signal-to-noise ratio (SINR) at the legitimate user, subject to a 
maximum allowable SINR at the eavesdropper and maximum 
transmit power at each AP, while guaranteeing specific 
SINR requirements on other legitimate users. The optimization problem is solved using a path-following algorithm.  We also propose a large-scale-based greedy AP selection scheme 
to improve the secrecy spectral efficiency (SSE). Finally, we propose a simple method for identifying the presence of an eavesdropper within the system. Our findings show that PPZF can substantially outperform the conventional maximum-ratio transmission (MRT)  scheme by providing around $2$-fold improvement in the SSE  compared to the MRT scheme. 
More importantly, for PPZF precoding scheme,  our proposed AP selection can achieve a remarkable SSE gain of up to $220\%$, while our power optimization approach  can provide an  additional gain  of up to   $55\%$ compared with a CF-mMIMO system with equal power allocation.

\let\thefootnote\relax\footnotetext{The authors are with the Centre for Wireless Innovation (CWI), Queen's University Belfast, BT3 9DT Belfast, U.K. email:\{yhimiari01, zahra.mobini, hien.ngo, m.matthaiou\}@qub.ac.uk. Yasseen Sadoon Atiya is also a lecturer at Imam Alkadhim University College.

This work is a contribution by Project REASON, a UK Government funded project under the Future Open Networks Research Challenge (FONRC) sponsored by the Department of Science Innovation and Technology (DSIT). It was also supported by the U.K. Engineering and Physical Sciences Research Council (EPSRC) (grant
No. EP/X04047X/1).
The work of H. Q. Ngo was supported by the U.K. Research and Innovation Future Leaders Fellowships under Grant MR/X010635/1. The work of M. Matthaiou has received funding from the European Research Council (ERC) under the European Union’s Horizon 2020 research and innovation programme (grant agreement No. 101001331). Parts of this paper were presented at IEEE VTC2023-Spring~\cite{YASSEEN:2023:VTC}.}
\end{abstract}

\begin{IEEEkeywords}
Physical layer security, access point selection, active eavesdropping, cell-free massive multiple-input multiple-output, power control, secrecy.
\end{IEEEkeywords}

Cell-free massive multiple-input multiple-output (CF-mMIMO) has been envisaged as one of the promising technologies for the next generation wireless networks. By breaking the concept of cell boundaries, deploying a large number of  geographically distributed access points (APs), and  coherently serving users in the same time-frequency
resources, it avails of all the benefits granted by the state of the art techniques including massive MIMO, distributed antenna systems, and coordinated multipoint with joint transmission. CF-mMIMO  systems bring the
APs geographically closer to the users and, hence, can provide seamless and handover free services for all users by exploiting macro-diversity gains and low path losses\cite{Matthaiou:COMMag:2021,interdonato2019ubiquitous,Zhang:JSAC:2020}.

While this distributed underpinning network architecture results in ubiquitous coverage at high spectral efficiency (SE), it also escalates the vulnerability of CF-mMIMO to malicious eavesdroppers, especially when the number of APs and users grows~\cite{zahra:2023:globecm}. 
More precisely, since the APs are densely distributed over the area of coverage, the distances between the APs and users or the potential eavesdroppers are shortened, which can increase the risk of confidential information leakage.  Therefore, the security of CF-mMIMO against eavesdropping and cyber-physical attacks is of great practical significance. 
Eavesdropping is typically classified into two
main paradigms~\cite{Kapetanovic:COMMAg:2015}: 1) passive eavesdropping and 2) active eavesdropping. In passive eavesdropping, eavesdroppers  silently overhear the information delivery between APs and the targeted legitimate users without sending any pilot or interference signals~\cite{Kapetanovic:COMMAg:2015,Zahra:TWC:2019}, while in the active eavesdropping, the active eavesdroppers intervene in the communication by either sending jamming signals and/or sending spoofing pilot sequences~\cite{Zhou:TWC:2012}. In a pilot spoofing  attack, the uplink pilot training phase of the  users of interest will be attacked by  active eavesdroppers. More specifically,  because the pilot sequences are publicly available and follow standardization, malicious eavesdroppers have the ability to actively transmit spoofing pilot sequences which results in a pilot contamination attack and, hence, information leakage. It has been shown that the adverse  effects of active eavesdropping attacks are much more detrimental than those of passive attacks~\cite{Kapetanovic:COMMAg:2015}.

It is now worth noting the substantial amount of research interest that has been sparked in recent years in the implementation of physical-layer security techniques in massive MIMO systems. In particular, various methods for active pilot spoofing attack detection  have been proposed in~\cite{Xiong:TIFS:2015,Xie:ICC:2017,Zhang:Acess:2019,Li:systemJ:2023}. Additionally,  the authors in~\cite{Fan:TCOM:2019,zhu:TWC:2014,Wu:TIFS:2016,Nguyen:JSAC:2018,Xu:TVT:2021,Chen:TIFS:2016, Guo:TWC:2016,Li:TVT:2018,Zhu:JSAC:2018,Lin:WCL:2021} sought to enhance the security of massive MIMO systems
either by using cooperative jamming~\cite{Fan:TCOM:2019} and artificial noise~\cite{zhu:TWC:2014,Wu:TIFS:2016,Nguyen:JSAC:2018,Xu:TVT:2021}  for degrading the eavesdropping  rate or by using resource allocation techniques~\cite{Chen:TIFS:2016, Guo:TWC:2016} and  beamforming designs~\cite{Li:TVT:2018,Zhu:JSAC:2018,Lin:WCL:2021}  for strengthening the legitimate links. However, in the context of secure CF-mMIMO systems, there have been only a few recent works~\cite{Timilsina:GC:2018, Hoang:TCOM:2018,Zhang:SYS.2020,Alageli:TIFS:2020}. In particular,  Timilsina \emph{et al.}~\cite{Timilsina:GC:2018} derived  the secrecy spectral efficiency (SSE) expressions for CF-mMIMO under active pilot attacks and compared them with those of co-located massive MIMO systems. For the same system setup of~\cite{Timilsina:GC:2018}, the authors in~\cite{Hoang:TCOM:2018} discussed  power allocation problems either to maximize the achievable rate of the attacked legitimate user or to maximize the achievable SSE. Later, the authors in~\cite{Zhang:SYS.2020} investigated the effect of hardware impairments on the SSE of a CF-mMIMO network under pilot spoofing attacks. In~\cite{Alageli:TIFS:2020}, the problem of joint power and data transfer in a CF-mMIMO system with active eavesdropping was investigated.
Moreover, the secrecy performance of CF-mMIMO with non-orthogonal
pilot sequences for the uplink channel estimation was studied in~\cite{Salah:Globecom:2020}.
However, current studies tend to investigate the secrecy performance of CF-mMIMO systems with single-antenna APs, while  CF-mMIMO can better reap the channel hardening effect of cellular massive MIMO, when deploying multiple antennas at the APs~\cite{interdonato2019ubiquitous}. Therefore, the recent works of~\cite{Zhang:TVT:2010,Zhang:TCOM:2022}  studied the secrecy performance of multi-antenna CF-mMIMO networks under active eavesdropping. However, they only focused on the simple maximum-ratio
transmission (MRT) precoding scheme for downlink transmissions which is incapable of mitigating inter-user interference. Providing an analytical framework to characterize the secrecy performance of CF-mMIMO over active eavesdropping by applying more advanced distributed precoding techniques  is, therefore, of paramount importance and is one main goal of this paper. 

Given that not all the serving APs in CF-mMIMO will contribute equally to the SE per user, due to the path loss which sharply changes with the geographical distance, several AP selection schemes  for the downlink transmissions have been proposed in~\cite{Hien:Asilomar:2018,Ammar:GLOBSIP:2019,Buzzi:TWC:2020,Chen:JSAC:2021}. It has been shown that AP selection can effectively improve both the  SE and energy efficiency while preserving the system scalability. However, a common assumption in  the current papers on secure CF-mMIMO systems is that all APs transmit to all the users in the coverage area. In practice, there are some APs which are located far away from the legitimate users and will not contribute much to the SE. Therefore, how to efficiently select APs to serve the users under an  active spoofing attack is of practical importance, but is still an open problem.
\begin{figure*}
\centering 
\begin{minipage}{.4\textwidth} 
\centering 
\includegraphics[width=0.9\linewidth]{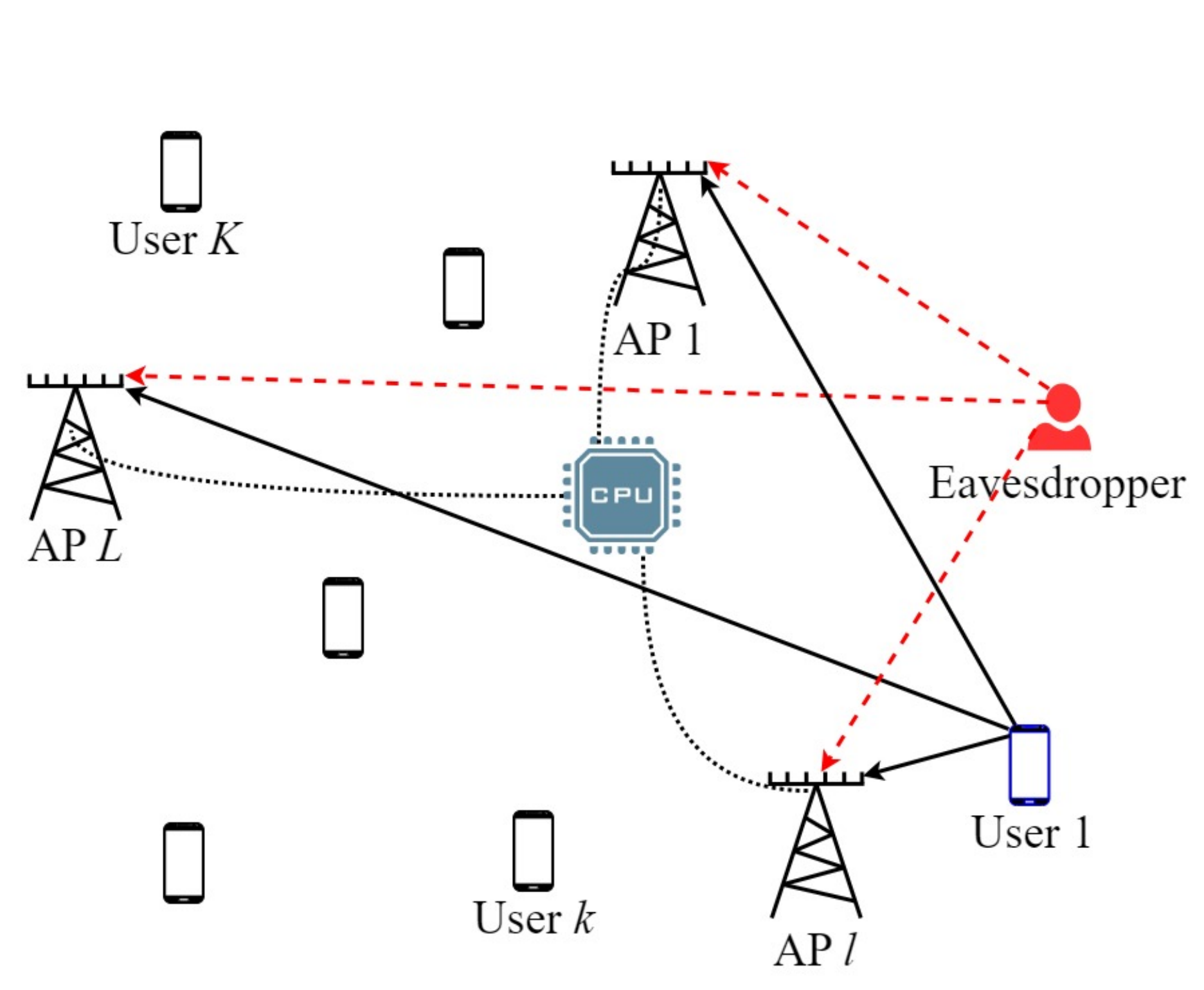}
\subcaption[a]{ Uplink training phase}
\label{Uplink Training Phase} 
\end{minipage}%
\begin{minipage}{.4\textwidth} 
\centering \includegraphics[width=0.9\linewidth]{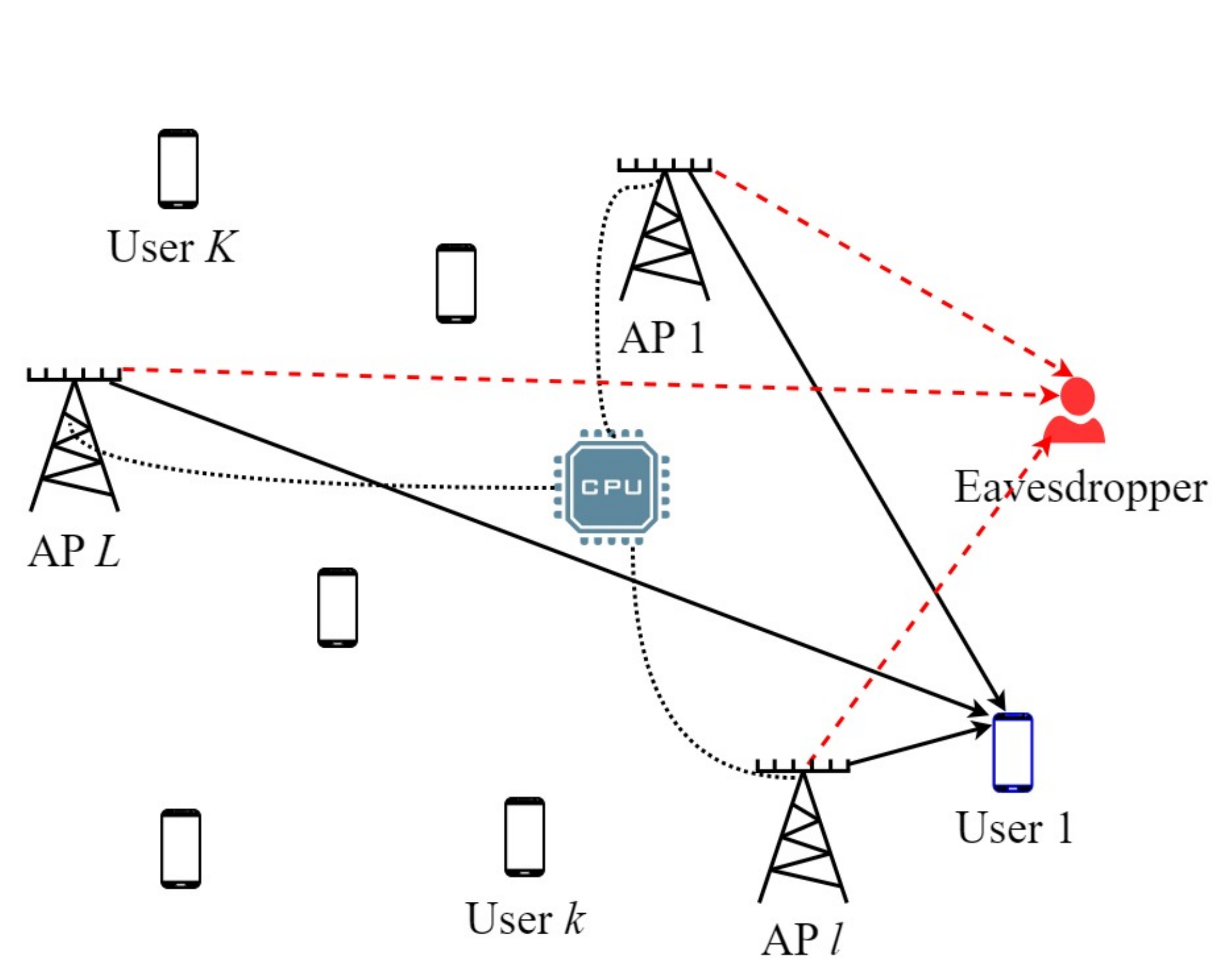}
\subcaption[b]{Downlink data transmission phase}
\label{Downlink Phase} 
\end{minipage} 
\caption{CF-mMIMO with $L$ multiple-antenna APs and $K$ legitimate users under an active  spoofing attack, where an eavesdropper contaminates the uplink channel estimation phase by sending an identical pilot sequence with the legitimate user $1$. 
}\label{fig1}
\hrulefill
\end{figure*}

We would like to highlight that this work is an extension of~\cite{YASSEEN:2023:VTC}.
Specifically,~\cite{YASSEEN:2023:VTC} pursued a performance analysis of  CF-mMIMO systems with multi-antenna APs experiencing an active eavesdropping attack, while the quality-of-service (QoS) SE requirements for users and maximum allowable SINR at the eavesdropper were ignored, and the impact of  AP selection was not investigated either. 
More importantly, we also provide an efficient scheme to detect the presence of eavesdroppers and determine which user is attacked. The main contributions of this paper are as follows:


\begin {itemize}
\item We develop a  framework for a CF-mMIMO system with multiple-antenna APs employing protective partial ZF-based (PPZF) precoding~\cite{Emil:TWC:2020}  under an active spoofing attack during the uplink training phase, while we also consider channel estimation errors.
\item We derive  closed-form expressions for the SSE which sheds useful insights into the system's performance. We then propose a greedy large-scale-based AP selection scheme to improve the SSE. 
\item We formulate an optimization problem for maximizing the received SINR at the user under attack, subject
to a maximum allowable SINR at the eavesdropper and maximum
transmit power at each AP while guaranteeing the specific QoS requirements on other users. We solve the optimization problem using the path-following algorithm.  We propose a novel and simple scheme to detect the presence of eavesdroppers in our system and determine which user is attacked. The method is based on the sample average power of the received pilot signals and can be implemented distributively at each AP, providing a detection probability of nearly one. We also propose a pilot re-transmission scheme to suppress the effect of the pilot spoofing attack.
We then extend our results to the multiple-antenna eavesdropper scenario. Finally, numerical results are presented to support our
findings and investigate the effect of different system parameters, including the number of antennas and APs and the position of the eavesdropper on the SSE performance, of CF-mMIMO systems. 
\end {itemize}

\textit{Notation:} We use bold upper case letters to denote matrices, and lower case letters to denote vectors. The superscript $(\cdot)^\text{H}$ stands for the conjugate-transpose (Hermitian);  $\mathbb{C}^{L\times N}$ denotes a $L\times N$ matrix; $\qI_M$  represents the $M\times M$ identity matrix; $\trace(\cdot)$ denotes the trace operation. A zero mean circular symmetric complex Gaussian distribution having variance $\sigma^2$ is denoted by $\mathcal{CN}(0,\sigma^2)$. Finally, $\Ex\{\cdot\}$ denotes the statistical expectation.	

\begin{small}
\begin{table}
\centering
 \caption{List of Notations}
\begin{tabular}{|c|c|}
\hline \text {Notation} & \text { Description } \\
\hline $L$ & \text {Number of APs} \\
\hline $M$ & \text {Number of AP antennas} \\
\hline $\betalk$ &\text {Large-scale fading coefficient for AP $l$ and UE $k$}\\
\hline $\glmk$ &\text {Small-scale fading coefficient for AP $l$ and UE $k$}\\
\hline $\gamma_{l, k}$ &\text {Variance of channel estimation for AP $l$ and UE $k$}\\
\hline $\Bphik$ & \text {Pilot sequence of the $k$-th user} \\
\hline $\tau_p$  & \text {Uplink training duration} \\
\hline $\BphiE$ & \text {Pilot sequence sent by the eavesdropper} \\
\hline $\hlmk$ & \text {Channel vector between AP $l$ and UE $k$} \\
\hline $\hlmE$ & \text {Channel vector between AP $l$ and the eavesdropper} \\
\hline $\mathbf {w}^ {\PZF} _{l, k}$ & \text {PZF precoding vector} \\
\hline $\mathbf {w}^ {\PMRT} _{l, k}$ & \text {PMRT precoding vector} \\
\hline $\rholk$ & \text {Power control coefficient for AP $l$ and UE $k$} \\
\hline $R_{sec}$ & \text {Secrecy spectral efficiency} \\
\hline
\end{tabular}
\label{t1}
\end{table}
\end{small}
\section{System Model}~\label{sec:sysmodel}
We consider a CF-mMIMO system comprising $L$ APs and $K$ single-antenna users. To deal with multiple-antenna users, one possible approach is to consider each user's antenna as an independent user~\cite{Marzetta:Cambridge:2016}. This system also contends with a single-antenna active eavesdropper, denoted as $E$, as illustrated in Fig.~\ref{fig1}. All APs cooperate to send data to all $K$ users, while the eavesdropper attempts to intercept the information intended for one of $K$ users by a pilot spoofing attack scheme.\footnote{Active full-duplex (FD) eavesdroppers are capable of both jamming as well as eavesdropping and/or  sending spoofing pilot sequences~\cite{Mohammadi:ICST:2023}. In this paper, we consider that an active half-duplex (HD) eavesdropper  focuses on sending the spoofing pilot sequences. The reason is that by attacking only the channel estimation phase, the eavesdropper remains fairly covert and power conservative as she only needs to operate during the estimation phase,  which  is typically  a small fraction of a  payload data transmission phase.}
More specifically, the eavesdropper sends an identical pilot sequence with a legitimate user of interest. By sending the same pilot with the legitimate user, the channel estimation at the APs is biased and contains the components of eavesdropping channel. In the downlink data transmission phase, since the precoding vectors at the APs associated with the targeted legitimate user are designed based on this biased channel estimation, the transmitted data signals may divert away from the legitimate user under attack and towards the direction of the eavesdropper. Therefore, the pilot spoofing attack leads to performance degradation of the legitimate transmission and, more severely, to information leakage toward the eavesdropper. 
We assume that the system knows the presence of an eavesdropper, and knows which user is being targeted. The method to achieve the above information is discussed in Section~\ref{sec:Eav_detection}. In addition,  we  consider an intelligent  eavesdropper with protocol knowledge and the ability to
synchronize and target the channel estimation phase.
In particular, given the recent advancements in software defined radios (SDRs), it is straightforward for smart eavesdroppers to target the pilots in a synchronous fashion for synchronous protocols~\cite{Miller:Mob:2012}. Without loss of generality, we assume that the eavesdropper targets user $1$. 

The sets of APs and users are denoted by $\mathcal{L}=\{1,\ldots, L\}$ and $\mathcal{K}=\{1,\ldots, K\}$, respectively. Each AP is equipped with $M$ antennas such that $LM > K$ and all APs are distributed over a certain area.
The channels of our considered system are modeled as follows:
\begin{itemize}
    \item The $M\times 1$ channel vector between the $l$-th AP and the $k$-th user is:
    \begin{align}~\label{eq:hlmk}
		\hlmk=\sqrt{\betalk} \glmk,
    \end{align}
where $\betalk$ is the large-scale fading coefficient, and $\glmk\sim \mathcal{CN}({\bf 0},\qI_M)$ is the small-scale fading vector.

    \item The $M\times 1$ channel vector between the $l$-th AP and the  eavesdropper is:
    \begin{align}~\label{eq:hlmk}
    \hlmE=\sqrt{\betalE} \glmE,
    \end{align}
where $\betalE$ represents the large-scale fading coefficient and $\glmE\sim \mathcal{CN}({\bf 0},\qI_M)$ is small-scale fading vector.
\end{itemize}

We assume that the considered CF-mMIMO system operates under time-division duplex (TDD) operation, where each coherence block includes three main phases: uplink training phase for channel estimation, downlink payload data transmission, and uplink payload data transmission. In this work, we focus on the downlink transmission, and hence, the uplink payload transmission phase is ignored.

\subsection{Uplink Training}

Uplink training phase is required for channel acquisition at the APs. These acquired channel estimates play a crucial role in both downlink precoding and uplink combining designs.
In this phase,  all users send pilot signals to the APs. Accordingly, each AP can estimate the corresponding channels to all users using the obtained pilot signals. Let us assume that the  $k$-th user sends a pilot sequence $\Bphik\in\mathbb{C}^{\tau_p \times 1}$. Here, $\tau_p$ represents the training duration. We consider orthonormal pilot assignment, i.e.,  $\Bphik^{\mathrm{H}}\Bphikp=0$ for $k\neq k'$ and $\|\Bphik\|^{2}=1$. This requires $\tau_p\geq K$.  

As previously mentioned, the eavesdropper aims at overhearing the confidential information intended for user $1$. This is accomplished by the eavesdropper transmitting a pilot sequence denoted as $\BphiE$, deliberately matching the pilot sequence sent by user 1, i.e., $\BphiE=\Bphione$. For most practical applications, the pilot sequences are publicly known and typically specified in the standard. Moreover, in some scenarios, some untrusted (malicious) nodes in the network may be regarded as eavesdroppers. In these scenarios, they are operating with known network protocols \cite{Zhou:TWC:2012,Hoang:TCOM:2018}. Accordingly, it is reasonable to assume that the pilot sequences associated with legitimate users are known to malicious users. Therefore, the pilot sequences $\Bphione, \ldots, \boldsymbol{\phi}_K$ associated with user $1$ to $K$  can be easily obtained by the eavesdropper. Then, the received pilot matrix at the $l$-th AP is given by
	\begin{align}~\label{eq:yplm}
		\mathbf{Y}_{p, l}=\sqrt{\tau_p \rho_{u}} \sum_{k=1}^{K} \hlmk \Bphik^{\text{H}}+\sqrt{\tau_p \rho_{E}} \qh_{l, E} \Bphione^{\text {H}}+ \mathbf{N}_{l}^{},
	\end{align}
where $\rho_{u}$ and $\rho_{E}$ are the transmit signal-to-noise ratios (SNRs) at each user and the eavesdropper, respectively. More precisely, $\rho_{u} \triangleq P_{u} / N_{0}$ and $\rho_{E} \triangleq P_{E} / N_{0}$, where $P_{u}$ and $P_{E}$ are the transmit powers, while $N_{0}$ is the average noise power. In addition,  the noise matrix $\mathbf{N}_{l} \in \mathbb{C}^{M\times \tau_p}$ includes independent and identically distributed (i.i.d.) $\mathcal{CN}(0,1)$ elements.
From the received pilot signal  \eqref{eq:yplm}, AP $l$ uses the minimum mean squared (MMSE) estimation technique to estimate the channels to all users. The MMSE channel estimate of $\hlmk$ is given by
\begin{align}~\label{eq:hhatlmk}  
		\hhatlmk = 
  \begin{cases} 
  \frac {\sqrt {\tau_{p}\rho _{u}}\betalk}{\tau_{p}\rho _{u}\betalk +1} \mathbf y_{l,k}, & k\neq 1, \\ 
  \frac {\sqrt {\tau_{p}\rho _{u}}\betalone}{\tau_{p}\rho _{u}\betalone + \tau_{p}\rho _{E}\betalE + 1} \mathbf y_{l,1}, & k= 1, 
  \end{cases}
\end{align}
where $\mathbf y_{l, k}=\mathbf Y_{p, l}\Bphik$. From \eqref{eq:hhatlmk}, we can see that $\hhatlmk$ includes $M$ i.i.d. $\mathcal{CN}(0,\gamalmk)$ elements, where
\begin{align}~\label{eq:gamalmk}  
		\gamalmk= 
  \begin{cases}
  \frac{\tau_{p} \rho_{u} \betalk^{2}}{\tau_{p} \rho_{u} \betalk+1}, & k \neq 1, 
  \\ 
  \frac{\tau_{p} \rho_{u} \betalone^{2}}{\tau_{p} \rho_{u} \betalone+\tau_{p} \rho_{E} \betalE+1}, 
  & k=1.
		\end{cases} 
\end{align}
Furthermore, let $\tilde { \mathbf {h}}_{l,k}$ be the channel estimation error, i.e., $\tilde { \mathbf {h}}_{l,k}=\hlmk-\hhatlmk$. Then, $\tilde { \mathbf {h}}_{l,k}$ is independent of $\hhatlmk$, and includes i.i.d. $\mathcal{CN}(0,\betalk-\gamalmk)$ elements.
\subsection{Downlink Data Transmission}
The channels, which are estimated during the uplink training phase, will serve as the basis for precoding the data symbols intended for all $K$ users. The precoded signal transmitted by  the $l$-th AP can be expressed as
\begin{align}~\label{eq:x_l} 
		\mathbf{x}_{l}=  \sum_{k=1}^{K} \sqrt {\rho _{l,k}} \qw_{l, k} s_{k},
\end{align}
where $\qw_{l, k} \in \mathbb{C}^{M \times 1}$, where $\Ex\{\|\qw_{l, {k}}\|^{2}\}=1$, is the precoding vector used by AP $l$ towards user $k$, and $\rho _{l,k}$ is the power control coefficient. Moreover, $s_{k}$ denotes the data symbol intended for the $k$-th user, where $\Ex\{|s_{k}|^{2}\}=1$. Thus, the received signals at the $k$-th user and the eavesdropper are respectively given by
\begin{align}~\label{eq:z_k}   
{z}_{k}= \sum_{l=1}^{L}\mathbf h^{\mathrm{H}}_{l,k}\mathbf{x}_{l}+{n}_{k},
\end{align}
and
	\begin{align}~\label{eq:z_E} 
		{z}_{E}= \sum_{l=1}^{L}\mathbf h^{\mathrm{H}}_{l,E}\mathbf{x}_{l}+{n}_{E},
	\end{align}
where $n_{k} \sim \mathcal{CN}(0,1)$ and $n_{E} \sim \mathcal{C N}(0,1)$ are the corresponding additive noise terms.
\subsection{Precoding Design}
Classical ZF precoding for CF-mMIMO systems suffers from modest array gain due to the fact that almost all the available degrees of freedom (DoF) are used to mitigate the interference. On the other hand, while MRT effectively sustains the system scalability, it is unable to cancel inter-user interference. For these reasons, we adopt the PPZF precoding scheme~\cite{Emil:TWC:2020} to design the precoding vector $\qw_{l, k}$ in \eqref{eq:x_l}. The main idea of PPZF  is that each AP only mitigates the interference it causes to the strongest users, namely the users with the largest channel gain, while the interference towards the weak users, namely the users with the smallest channel gain, is tolerated. Consequently, PPZF can provide an acceptable trade-off between interference cancellation and boosting the desired signal.

More specifically, in PPZF scheme, each AP $l$ virtually divides users into two groups: \emph{1)} strong user set, and \emph{2)} weak user set based on their channel gains, $\Betalk$, $\forall k \in \mathcal{K}$. The user grouping can adhere to various criteria. For example, one criterion could be the mean square of the channel gain: user $k$ is assigned to the strong group for AP $l$  if $\Betalk$ exceeds a predetermined threshold; otherwise, user $k$ belongs to a weak group. Now, let us denote, for AP $l$,  the set of indices of strong users by $\Sl \subset \mathcal{K}$, and the set of indices of weak users by $\mathcal{W}_{l} \subset \mathcal{K}$, respectively, where $\Sl \cap \mathcal{W}_{l}=\varnothing$, and $\vert\Sl\vert+\vert\mathcal{W}_{l}\vert=K$. Then, AP $l$ transmits to the users in  $\Sl$ by using partial ZF, and to the users in $\mathcal{W}_{l}$  by using protective MRT (PMRT) to avoid interference to the users in $\Sl$. Here, we note that to implement PPZF, the number of transmit antennas at each AP, must meet the requirement $M\geq |\Sl|$.

{Let $\hat\qH_l = [\hat\qh_{l,1}, \ldots, \hat\qh_{l,K}] \in\mathbb{C}^{ M \times K}$ be the collective channel estimation matrix from AP $l$ to all users,  $\mathbf{E}_{\Sl}=	\big[\qe_i: i \in \Sl\big] \in \mathbb{C}^{ K \times |\Sl|}$, where $\mathbf{e}_i$ is the $i$-th column of $\mathbf{I}_{K}$. In addition, let user $k$ correspond to the $j$-th element of set $\Sl$, $j \in \{1, \ldots , |\Sl|\}$. Then, we define $\Bvarepsilon_{k}$ as the $j$-th column of $\mathbf{I}_{|\Sl|}$. With PPZF, the precoding vector in \eqref{eq:x_l} can be expressed as }
\begin{align}~\label{eq:wlk_PPZF1}  
		\qw_{l, k}= 
  \begin{cases}
  \mathbf {w}^ {\PZF} _{l, k}, & \text{if} ~ k \in \mathcal {S} _{l}, 
  \\ 
  \mathbf {w}^ {\PMRT} _{l,k}, 
  & \text{if} ~ k \in \mathcal {W} _{l},
		\end{cases} 
\end{align}
where $\mathbf {w}^ {\PZF} _{l, k}$ is the PZF precoding vector and $\mathbf{w}^ {\PMRT} _{l,k}$ is the PMRT precoding vector. More precisely,
\begin{itemize}
    \item PZF precoding vector:
\begin{align}~\label{eq:wPZF}
\qw_{l, {k}}^{\PZF}=\frac{\hat{\qH}_{l} \mathbf{E}_{\mathcal{S}_{l}}\left(\mathbf{E}_{\mathcal{S}_{l}}^{\mathrm{H}} \hat{\qH}_{l}^{\mathrm{H}} \hat{\qH}_{l} \mathbf{E}_{\mathcal{S}_{l}}\right)^{-1} \Bvarepsilon_k}
{\sqrt{\Ex\Bigg\{\bigg\|\hat{\qH}_{l} \mathbf{E}_{\mathcal{S}_{l}}\left(\mathbf{E}_{\mathcal{S}_{l}}^{\mathrm{H}} \hat{\qH}_{l}^{\mathrm{H}} \hat{\qH}_{l} \mathbf{E}_{\mathcal{S}_{l}}\right)^{-1} \Bvarepsilon_{k}\bigg\|^{2}\Bigg\}}},
\end{align}
where the denominator of \eqref{eq:wPZF} is the normalization term, which is given in closed-form as
\begin{align} ~\label{eq:normterm}
\Ex\left\{\left\|\hat{\qH}_{l} \mathbf{E}_{\mathcal{S}_{l}}\left(\mathbf{E}_{\mathcal{S}_{l}}^{\mathrm{H}} \hat{\qH}_{l}^{\mathrm{H}} \hat{\qH}_{l} \mathbf{E}_{\mathcal{S}_{l}}\right)^{\!\!-1
}\!\! \Bvarepsilon_k\right\|^{2}\right\}\!=\!\frac{1}{\left(M\!-\!{\tausl}\right) \gamma_{l, k}}.
\end{align}
Therefore, for any pair of users $k, t \in \Sl$ we have
\begin{align} ~\label{eq:alphaPZF} 
	\alpha^{\textsf {PZF}} _{l,k,t}\triangleq&\hat { \qh}_{l,k}^{\text {H}} \mathbf {w} ^{ \textsf {PZF} }_{l,t} =\begin{cases} 0, & t \neq{k}, \\ \sqrt {(M-\tausl)\gamma _{l,k}}, & t ={k}.
 \end{cases}\end{align}
    \item PMRT precoding vector, from AP $l$ to user $j$, $j \in \mathcal{W}_{l}$:
     \begin{equation}\label{eq:wPMRT}
\mathbf {w}^ {\PMRT} _{l,j} = \frac { \mathbf {B}_{l}\hat 
\qH_{l} \mathbf {e}_{j} } {\sqrt { \Ex\{ \|\mathbf {B}_{l}\hat { \mathbf {H}}_{l} \mathbf {e}_{j}\|^{2} \} }} = \frac { \mathbf {B}_{l}\hat { \mathbf {H}}_{l} \mathbf {e}_{j} } {\sqrt { (M- \tausl)\gamma_{l,j} }},
 \end{equation}
where
	\begin{align}~\label{eq:Bl}
		\mathbf{B}_{l}=\mathbf{I}_{M}-\hat{\qH}_{l} \mathbf{E}_{\mathcal{S}_{l}}\left(\mathbf{E}_{\mathcal{S}_{l}}^{\mathrm{H}} \hat{\qH}_{l}^{\mathrm{H}} \hat{\qH}_{l} \mathbf{E}_{\mathcal{S}_{l}}\right)^{-1} \mathbf{E}_{\mathcal{S}_{l}}^{\mathrm{H}} \hat{\qH}_{l}^{\mathrm{H}}, 
	\end{align}
which is the null space of $\hat{\qH}_l \mathbf{E}_{\Sl}$. For any pair of users $k, t \in \mathcal{W}_{l}$ we have
\begin{align}~\label{eq:Ex} 
		\Ex\left \{{{\hat { \qh}_{l,k} ^{\text {H}} \mathbf {w} ^ {\textrm {PMRT}} _{l,{t}}}}\right \} = \begin{cases} 0, & t \neq{k}, \\ \sqrt {(M-\tausl) \gamma _{l,k}}, & t ={k}. \end{cases}
\end{align}
\end{itemize}
Then, the transmit signals at AP $l$  is
\begin{align}~\label{eq:xl}
\mathbf {x}_{l} = \sum _{k \in \mathcal {S} _{l}} \sqrt {\rho _{l,k}} \mathbf {w}^ {\PZF} _{l, k} s_{k} + \sum _{j \in \mathcal {W} _{l}} \sqrt {\rho _{l,j}} \mathbf {w}^ {\PMRT} _{l,j} s_{j}.
\end{align}

\begin{figure*}
\begin{align}
~\label{eq:SINRk}
\SINRk=
&\frac{\bigg|\sum\limits _{l \in \mathcal {Z} _{k}}\!\sqrt{\rholk} \Ex\!\left\{\qh_{l, k}^{\mathrm{H}}\qw_{l, {k}}^{{\PZF}}\right\}+\sum\limits_{p\in \mathcal{M}_{k}}\!\! \sqrt{\rhopk} \Ex\!\left\{\qh_{p, k}^{\mathrm{H}} \qw_{p, k}^{{\PMRT}}\right\}\bigg|^{2}}
{\sum\limits_{t=1}^{K}\!\Ex\Bigg\{\bigg|\sum\limits _{l \in \mathcal {Z} _{t}}\!\!  \sqrt{\rholt} \qh_{l, k}^{\mathrm{H}} \qw_{l, {t}}^{\PZF}\!+\!\!\sum\limits _{p\in \mathcal {M} _{t}}\!\!\! \sqrt{\rhopt} \qh_{p, k}^{\mathrm{H}} \qw_{p, {t}}^{{\PMRT}}\bigg|^{2}\!\Bigg\}-\bigg|\sum \limits_{l \in \mathcal {Z}_{k}}\!\!\!\!\sqrt{\rholk} \Ex\!\left\{\!\qh_{l, k}^{\mathrm{H}} \qw_{l, {k}}^{\PZF\!}\right\}\!+\!\!\!\sum\limits _{p\in \mathcal {M} _{k}}
\!\!\!\sqrt{\rhopk} \Ex\left\{\!\qh_{p, k}^{\mathrm{H}} \qw_{p,k}^{{\PMRT}}\right\}\bigg|^{2}+1}.\tag{25}
\end{align}
	\hrulefill
\end{figure*}

\section{Secrecy Performance Analysis}
In this section, we evaluate the secrecy performance provided by a CF-mMIMO system with PPZF precoding under an active eavesdropping attack. 
First, let us define $\mathcal{Z}_{k}$ and $\mathcal{M}_{k}$ as the set of indices of APs that transmit to the user $k$ by using PZF and PMRT, respectively, as 
\begin{align}\label{eq:Z_k}
\mathcal{Z}_{k} \triangleq\left\{l: k \in \mathcal{S}_{l}, l=1, \ldots, L\right\}, 
\end{align}
and
\begin{align}\label{eq:M_k}
\mathcal{M}_{k} \triangleq\left\{l: k \in \mathcal{W}_{l}, l=1, \ldots, L\right\},
\end{align}
with $\mathcal{Z}_{k} \cap \mathcal{M}_{k}=\varnothing$, and $\left|\mathcal{Z}_{k}\right|+\left|\mathcal{M}_{k}\right|=L$. 

\subsection{Spectral Efficiency of the Legitimate Links}
Using~\eqref{eq:Z_k} and~\eqref{eq:M_k}, the received signal at the $k$-th user in~\eqref{eq:z_k} can be rewritten as
\begin{align}\label{eq:z}
&z_{k}=\Bigg(\sum_{l \in \mathcal{Z}_{k}} \sqrt{\rholk} \qh_{l, k}^{\mathrm{H}} \qw_{l, {k}}^{\PZF}\notag+\sum_{p \in \mathcal{M}_{k}} \sqrt{\rhopk} \qh_{p, k}^{\mathrm{H}} \qw_{p, {k}}^{\PMRT}\Bigg) s_{k} \\
&+\sum_{\substack{t=1 \\ t \neq k}}^{K}
\Bigg(\!\sum_{l \in \mathcal{Z}_{t}} \!\sqrt{\rholt} \qh_{l, k}^{\mathrm{H}} \qw_{l, {t}}^{\PZF}\notag\!+\!\!\sum_{p \in \mathcal{M}_{t}}\! \sqrt{\rhopt} \qh_{p, k}^{\mathrm{H}} \qw_{p, {t}}^{\PMRT}\Bigg) s_{t} \!+\!n_{k},\nonumber\\
& =\text {CP}_{k} \times s_{k} + \text {PU}_{k} \times s_{k} + \sum_{\substack{t=1 \\ t \neq k}}^{K}\text {UI}_{k,t} \times s_{t} + n_{k},
\end{align}
where $\mathrm{CP}_{k}$, $\mathrm{PU}_{k}$, and $\mathrm{UI}_{k,t}$ show the coherent precoding gain, precoding gain uncertainty, and multi-user interference, respectively, defined as
\begin{align}~\label{eq:z_k_3}
\text {CP}_{k}&\!\!= \!\!\!\sum_{l \in \mathcal{Z}_{k}}\!\! \sqrt{\rholk} \mathop {\Ex}\nolimits \left \{{\qh_{l, k}^{\mathrm{H}} \qw_{l, {k}}^{\PZF}}\right \}\!\!+\!\!\!\!\sum_{p \in \mathcal{M}_{k}} \!\!\sqrt{\rhopk} \mathop {\Ex}\nolimits \left \{{\qh_{p, k}^{\mathrm{H}} \qw_{p, {k}}^{\PMRT}}\right \}, \\ \text {PU}_{k}&=\sum_{l \in \mathcal{Z}_{k}}\left ({\sqrt {\rho _{l,k}} \mathbf {h}_{l,k} ^{\text {H}}\qw_{l, {k}}^{\PZF} - \sqrt {\rho _{l,k}} \mathop {\Ex}\nolimits \left \{{\mathbf {h}_{l,k} ^{\text {H}}\qw_{l, {k}}^{\PZF} }\right \}}\right)\notag\\ &+\sum_{p \in \mathcal{M}_{k}} \left ({\sqrt {\rho _{l,k}} \mathbf {h}_{l,k} ^{\text {H}}\qw_{p, {k}}^{\PMRT} - \sqrt {\rho _{l,k}} \mathop {\Ex}\nolimits \left \{{\mathbf {h}_{l,k} ^{\text {H}}\qw_{p, {k}}^{\PMRT}}\right \}}\right), \\
\text {UI}_{k,t}&=\sum_{l \in \mathcal{Z}_{k}}\!\! \sqrt{\rholt} {\qh_{l, k}^{\mathrm{H}} \qw_{l, {t}}^{\PZF}}\!\!+\!\!\!\!\sum_{p \in \mathcal{M}_{k}} \!\!\sqrt{\rhopt} {\qh_{p, k}^{\mathrm{H}} \qw_{p, {t}}^{\PMRT}}~\label{eq:{UI}_{k,t}}
. 
\end{align}
Each user $k$ effectively sees a deterministic channel ($\mathrm{CP}_{k}$) with some unknown noise and, hence, to detect the intended symbol from the received
signal, it relies only on the
statistical channel state information (CSI). More specifically, since $s_k$ and $s_t$ are uncorrelated for any $t \neq k$, the first term in~\eqref{eq:z} is uncorrelated with the third term. Additionally, since $s_k$ is independent of $\text {PU}_{k}$, the first and second terms are also uncorrelated. The fourth term. i.e, noise, is independent of the first term in~\eqref{eq:z}. Accordingly, the sum of the second, third, and fourth terms in~\eqref{eq:z} can be considered as an uncorrelated effective noise. Following the discussion in~\cite [Sec. 2.3.2]{Marzetta:Cambridge:2016},  an achievable downlink SE for user $k$ can be written as
\begin{align} \label{eq:SE_k1} 
R_k= \log_2\left(1+\SINRk \right),
\end{align}
where
\begin{align}
\SINRk=\frac{\left|\mathrm{CP}_k\right|^2}{\Ex\left\{\left|\mathrm{PU}_k\right|^2\right\}+\sum\limits_{t \neq k}^K \Ex\left\{\left|\mathrm{UI}_{k, t}\right|^2\right\}+1},
\end{align}
which can be re-expressed as~\eqref{eq:SINRk} at the top of the next page.

We next provide a  closed-form SE expression for user $k$.
\begin{proposition}\label{prop:SINRk}
The closed-form expression for the SE of user $k$ with PPZF precoding under an active eavesdropping attack  can be expressed as \eqref{eq:SE_k1}, where
\setcounter{equation}{25}
\begin{align}~\label{eq:SINR_k2} 
\SINRk= \frac{\left(\!\!\!\!\quad\sum\limits_{l=1}^{L} \sqrt{\left(M-\tausl\right) \rholk \gamma_{l, k}}\right)^{2}}{\sum\limits_{t=1}^{K} \sum\limits_{l=1}^{L} \rholt\left(\beta_{l, k}-\delta _{l,k}\gamma_{l, k}\right)+1},
\end{align}
where $\delta_{l, k} \triangleq 1$ if  $k\in \mathcal{S}_{l}$ and $\delta_{l, k} \triangleq 0$ if  $k\in \mathcal{W}_{l}$.
\end{proposition}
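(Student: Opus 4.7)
The plan is to evaluate the numerator and denominator of \eqref{eq:SINRk} using two ingredients: the PPZF orthogonality identities \eqref{eq:alphaPZF} and \eqref{eq:Ex}, combined with the decomposition $\qh_{l,k}=\hhatlmk+\tilde{\qh}_{l,k}$, where $\tilde{\qh}_{l,k}\sim\mathcal{CN}(\mathbf{0},(\betalk-\gamalmk)\qI_M)$ is independent of $\hhatlmk$ and hence of every PPZF precoder (which is a deterministic function of the estimates). I will also use the fact that $\mathcal{Z}_k\cup\mathcal{M}_k=\{1,\ldots,L\}$ is a partition, in order to merge the two sums appearing in $\mathrm{CP}_k$.

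For the numerator, substituting \eqref{eq:alphaPZF} when $l\in\mathcal{Z}_k$ and \eqref{eq:Ex} when $l\in\mathcal{M}_k$, and noting that $\Ex\{\tilde{\qh}_{l,k}^{\mathrm{H}}\qw_{l,k}^{\cdot}\}=0$, collapses each expectation in $\mathrm{CP}_{k}$ to $\sqrt{(M-\tausl)\gamalmk}$; the two sums merge into $\sum_{l=1}^{L}\sqrt{(M-\tausl)\rholk\gamalmk}$, which is the numerator of \eqref{eq:SINR_k2}. For the denominator I would expand $\Ex\big\{\big|\sum_{l}\sqrt{\rholt}\,\qh_{l,k}^{\mathrm{H}}\qw_{l,t}^{\cdot}\big|^{2}\big\}$ for each $t$. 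Because channels at distinct APs are independent, the cross-AP terms factor as products of first-order expectations; these vanish for $t\neq k$ and, for $t=k$, reconstruct exactly $|\mathrm{CP}_{k}|^{2}$, cancelling the $-|\mathrm{CP}_{k}|^{2}$ in \eqref{eq:SINRk}. Only the per-AP second moments therefore contribute.

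The crux of the proof, and the main obstacle, is showing that the per-AP second moment admits the unified form $\Ex\{|\qh_{l,k}^{\mathrm{H}}\qw_{l,t}^{\cdot}|^{2}\}=\betalk-\delta_{l,k}\gamalmk$ for $t\neq k$, which is what produces the clean denominator of \eqref{eq:SINR_k2}. I would split into four sub-cases: (i) $l\in\mathcal{Z}_t,\ k\in\Sl$---\eqref{eq:alphaPZF} annihilates the estimated-channel component, leaving only $\tilde{\qh}_{l,k}^{\mathrm{H}}\qw_{l,t}^{\PZF}$, with variance $\betalk-\gamalmk$; (ii) $l\in\mathcal{Z}_t,\ k\in\mathcal{W}_l$---$\hhatlmk$ is independent of the PZF precoder (which uses only $\{\hhatlmt\}_{t\in\Sl}$) and contributes $\gamalmk$, for a total of $\betalk$; (iii) $k\in\Sl$ with PMRT---$\mathbf{B}_l\hhatlmk=\mathbf{0}$ again leaves only the error term; (iv) $k\in\mathcal{W}_l$ with PMRT---$\mathbf{B}_l$ is a rank-$(M-\tausl)$ orthogonal projector independent of $\hhatlmk$ and $\hhatlmt$, so the identity $\Ex\{\hhatlmk^{\mathrm{H}}\mathbf{B}_l\hhatlmk\mid\mathbf{B}_l\}=\gamalmk\,\tr(\mathbf{B}_l)$ together with $\mathbf{B}_l^{2}=\mathbf{B}_l$ gives $\gamalmk$, for a total of $\betalk$. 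A small but essential consistency check is that the pilot-spoofing coupling between $\hhatlm$ and $\qh_{l,E}$ at the same AP does not destroy the cross-user independence used throughout: for $k\neq 1$, $\hhatlmk$ depends only on channel and noise components that are orthogonal to those underlying $\hhatlm$, so independence is preserved. Summing the resulting second moments over $l$ and $t$ and absorbing the $t=k$ cancellation yields the denominator $\sum_{t,l}\rholt(\betalk-\delta_{l,k}\gamalmk)+1$ of \eqref{eq:SINR_k2}.
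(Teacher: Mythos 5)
Your route is the same as the paper's (Appendix~\ref{app:SINRK}): compute $\mathrm{CP}_k$ from the orthogonality identities \eqref{eq:alphaPZF} and \eqref{eq:Ex} together with the zero-mean, independent estimation error; expand the interference power AP by AP; and reduce each per-AP contribution to $\betalk-\delta_{l,k}\gamalmk$ by a case analysis on whether $k\in\Sl$ or $k\in\mathcal{W}_l$. Your numerator and all four $t\neq k$ sub-cases line up with the paper's \eqref{eq:ExBUK3}--\eqref{eq:ExBUK8}.

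The genuine gap is in the $t=k$ accounting. The cross-AP terms for $t=k$ reconstruct only the \emph{off-diagonal} part of $|\mathrm{CP}_k|^2$, not "exactly $|\mathrm{CP}_k|^2$"; after subtracting the full $|\mathrm{CP}_k|^2$, what survives from the $t=k$ term is $\sum_{l}\rholk\,\mathrm{Var}\!\left(\qh_{l,k}^{\mathrm{H}}\qw_{l,k}\right)$, i.e.\ per-AP \emph{variances}, not second moments. Taken literally, your bookkeeping ("only the per-AP second moments contribute", with the unified formula proved only for $t\neq k$) leaves an uncancelled $\sum_{l}\rholk(M-\tausl)\gamalmk$ in the denominator. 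To close this you must show $\mathrm{Var}(\qh_{l,k}^{\mathrm{H}}\qw_{l,k})=\betalk-\delta_{l,k}\gamalmk$: immediate for $k\in\Sl$, where the signal part is deterministic, but for $k\in\mathcal{W}_l$ it needs the second moment of the quadratic form $\hhatlmk^{\mathrm{H}}\mathbf{B}_l\hhatlmk$ (conditional variance $\gamalmk^2\,\mathrm{tr}(\mathbf{B}_l)$, hence $\gamalmk$ after normalization, plus $\betalk-\gamalmk$ from the error term) --- exactly the computation hidden in the paper's \eqref{eq:ExBUK6}. Relatedly, your sub-case (iv) quotes the $t=k$ first-moment identity $\Ex\{\hhatlmk^{\mathrm{H}}\mathbf{B}_l\hhatlmk\mid\mathbf{B}_l\}=\gamalmk\,\mathrm{tr}(\mathbf{B}_l)$ where the $t\neq k$ argument only needs $\Ex\{|\hhatlmk^{\mathrm{H}}\qw_{l,t}^{\PMRT}|^2\}=\gamalmk\,\Ex\{\|\qw_{l,t}^{\PMRT}\|^2\}=\gamalmk$; the conclusion is right but the cited identity belongs to the missing $t=k$ step. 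Once that term is repaired, the result follows as you state.
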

\begin{proof}
See Appendix~\ref{app:SINRK}.
\end{proof}


\subsection{Spectral Efficiency of  the Eavesdropper}
Hereafter, we assume that the eavesdropper has perfect  CSI knowledge which results in the 
worst case SSE. In particular, the received signal at eavesdropper~\eqref{eq:z_E} can be represented in the form of 
\begin{align} ~\label{z_E2} 
z_{E} &= \sum\limits_{l=1}^{L} \!\!
\sqrt{\rho_{l, 1}}\mathbf h_{l,E}^{\text{H}}  \qw_{l,1} s_{1} 
\!+ \!\! {\sum\limits_{\substack{ t \neq 1}}^{K} } \! 
\sum\limits_{l=1}^{L} \!\!
\sqrt{\rholt} \mathbf h_{l,E}^{\mathrm{H}}  \qw_{l,t}s_{t}\! +\! n_{E},\notag\\
&=\BUE \times s_1+\underbrace{\sum_{t \neq 1}^K \mathrm {UI}_{ E, t} \times s_{t}+n_{E}}_{\text {treated as noise }},
\end{align}
where $\BUE$ represents the strength of the desired signal $s_1$, while $\UIEt$ denotes the interference caused by the remaining users $(t\neq k)$, and they can be expressed as
\begin{align}~\label{eq:SINRE1}
  \BUE&\triangleq 
  \sum\limits_{l \in \mathcal{Z}_{1}} \sqrt{\rho_{l, 1}} \qh_{l, E}^{\mathrm{H}} \qw_{l, 1}^{\PZF}+\sum\limits_{p \in \mathcal{M}_{1}} \sqrt{\rho_{p, 1}} \qh_{p, E}^{\mathrm{H}} \qw_{p, 1}^{\PMRT},
  \\
  \mathrm {UI}_{ E, t}&\triangleq
  \sum\limits_{l \in \mathcal{Z}_{t}} \sqrt{\rholt} \qh_{l, E}^{\mathrm{H}} \qw_{l, t}^{\PZF}+\sum\limits_{p \in \mathcal{M}_{t}} \sqrt{\rho_{p,t}} \qh_{p, E}^{\mathrm{H}} \qw_{p,t}^{\PMRT}. 
\end{align} 
Let us denote the mutual information between $s_1$ and $z_{E}$ by $I_{E}\left(s_1 ; z_{E}\right)$. Then, the upper-bound for $I_{E}\left(s_k ; z_{E}\right)$ is given by
\begin{align} \label{I_E}
 I_{\mathrm{E}}\left(s_1 ; z_{E}\right)&\stackrel{(a)}{\leq} I_{E}\left(s_1 ; z_{E} \mid\left\{h_{l, k}\right\}_{l, k},\left\{\hat{h}_{l, k}\right\}_{l, k},\left\{h_{l,E}\right\}_l\right) \notag \\
 &=\mathbb{E}\left\{\log _2\left(1+\frac{\left|\mathrm{BU}_{E, 1}\right|^2}{\sum\limits_{t \neq 1}^K\left|\mathrm{UI}_{E, t}\right|^2+1}\right)\right\},\notag\\
 &\stackrel{(b)}{\approx} \log _2\left(1+\frac{\mathbb{E}\left\{\left|\mathrm{BU}_{E, 1}\right|^2\right\}}{\sum\limits_{t\neq 1}^K \mathbb{E}\left\{\left|\mathrm{UI}_{E, t}\right|^2\right\}+1}\right),
\end{align}
where the inequality (a) comes from the fact that the eavesdropper has  {perfect CSI knowledge}, while the approximation in (b)  follows from \cite[Lemma 1]{Zhang:JSTSP:2014}.

\begin{proposition}\label{prop:SINRE}
The SE of the eavesdropper, which overhears the confidential messages destined for user $1$ with PPZF precoding,  can be approximated as $R_E\approx \log_2\left(1+\SINRE \right)$, where
\begin{align}~\label{eq:SINR_E2} 
\SINRE\!=\!\frac {\left(\sum\limits_{l=1}^{L} \!\!\sqrt {{\rho _{l,1}}(M\!\!-\!\!\tausl)\gamma_{l,E}}\right)^{2}\!\!\!\!+\!\!\sum\limits_{l=1}^{L}\rho_{l, 1}\beta_{l,E}\!\!-\!\!\!\sum\limits_{l \in \Zone}\!\! \rho_{l, 1}\gamma_{l,E}}{{\sum\limits_{\substack{ t \neq 1}}^{K}} \sum\limits_{l=1}^{L}\rholt(\beta_{l,E}\!-\!\delta_{l,1}\gamma_{l,E})\!+\!1}. 
\end{align}
\end{proposition}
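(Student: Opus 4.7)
The plan is to evaluate the numerator and denominator of the bound in \eqref{I_E} separately, i.e., compute $\mathbb{E}\{|\BUE|^2\}$ and $\mathbb{E}\{|\mathrm{UI}_{E,t}|^2\}$ in closed form under PPZF. The key technical tool I would exploit is the statistical coupling between $\qh_{l,E}$ and $\hat{\qh}_{l,1}$ caused by the pilot spoofing attack. Starting from \eqref{eq:yplm}--\eqref{eq:hhatlmk}, I would first observe that if one were to compute an MMSE estimate $\hat{\qh}_{l,E}$ from $\qy_{l,1}$, it would be a deterministic scalar multiple of $\hat{\qh}_{l,1}$; specifically, matching second moments gives the decomposition
\begin{align*}
\qh_{l,E}=\sqrt{\gamma_{l,E}/\gamma_{l,1}}\,\hat{\qh}_{l,1}+\tilde{\qh}_{l,E},
\end{align*}
where $\tilde{\qh}_{l,E}\sim\mathcal{CN}(\mathbf{0},(\beta_{l,E}-\gamma_{l,E})\qI_M)$ is independent of $\hat{\qH}_l$ and hence of every precoder $\qw_{l,k}$. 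This single identity is what turns the eavesdropper analysis into an analogue of the legitimate-user analysis.

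With this decomposition and the independence of channels across APs, I would write $\mathbb{E}\{|\BUE|^2\}=\big|\sum_{l}\sqrt{\rho_{l,1}}\,\mathbb{E}\{\qh_{l,E}^{\mathrm{H}}\qw_{l,1}\}\big|^2+\sum_{l}\rho_{l,1}\mathrm{Var}(\qh_{l,E}^{\mathrm{H}}\qw_{l,1})$. The mean is computed by substituting the decomposition: the $\tilde{\qh}_{l,E}$ term vanishes, and for both $l\in\Zone$ (PZF, use \eqref{eq:alphaPZF}) and $l\in\Mone$ (PMRT, use \eqref{eq:Ex}), one obtains $\mathbb{E}\{\hat{\qh}_{l,1}^{\mathrm{H}}\qw_{l,1}\}=\sqrt{(M-\tausl)\gamma_{l,1}}$, giving $\mathbb{E}\{\qh_{l,E}^{\mathrm{H}}\qw_{l,1}\}=\sqrt{(M-\tausl)\gamma_{l,E}}$. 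Squaring and summing yields the coherent term $\big(\sum_l\sqrt{\rho_{l,1}(M-\tausl)\gamma_{l,E}}\big)^2$. For the per-AP variance I would separate cases: for $l\in\Zone$, the PZF identity \eqref{eq:alphaPZF} makes $\hat{\qh}_{l,1}^{\mathrm{H}}\qw_{l,1}^{\PZF}$ deterministic, so only the $\tilde{\qh}_{l,E}^{\mathrm{H}}\qw_{l,1}^{\PZF}$ term contributes, giving $\rho_{l,1}(\beta_{l,E}-\gamma_{l,E})$; for $l\in\Mone$, I need the second moment of $\hat{\qh}_{l,1}^{\mathrm{H}}\qw_{l,1}^{\PMRT}=\hat{\qh}_{l,1}^{\mathrm{H}}\qB_l\hat{\qh}_{l,1}/\sqrt{(M-\tausl)\gamma_{l,1}}$, which, using the independence of $\qB_l$ and $\hat{\qh}_{l,1}$ and the fact that $\qB_l$ is a rank-$(M-\tausl)$ projection, reduces to evaluating $\mathbb{E}\{\|\qB_l\hat{\qh}_{l,1}\|^4\}$ via the second moment of a chi-squared random variable; this gives $\rho_{l,1}\beta_{l,E}$. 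Summing the two cases produces exactly $\sum_l\rho_{l,1}\beta_{l,E}-\sum_{l\in\Zone}\rho_{l,1}\gamma_{l,E}$, matching the extra terms in the numerator of \eqref{eq:SINR_E2}.

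For the denominator I would show that for each $t\neq 1$ the mean $\mathbb{E}\{\qh_{l,E}^{\mathrm{H}}\qw_{l,t}\}$ vanishes, regardless of whether $t$ is in $\Sl$ or $\mathcal{W}_l$. This follows from the PZF orthogonality in \eqref{eq:alphaPZF} when $1,t\in\Sl$, from $\qB_l\hat{\qh}_{l,1}=\mathbf{0}$ when $1\in\Sl$ and $t\in\mathcal{W}_l$, and from the independence/zero-mean of $\hat{\qh}_{l,1}$ from $\qw_{l,t}$ when $1\in\mathcal{W}_l$. Hence $\mathbb{E}\{|\mathrm{UI}_{E,t}|^2\}=\sum_{l}\rho_{l,t}\mathbb{E}\{|\qh_{l,E}^{\mathrm{H}}\qw_{l,t}|^2\}$, and substituting the decomposition of $\qh_{l,E}$ into each of the four $(l\in\Zone/\Mone)\times(t\in\Sl/\mathcal{W}_l)$ sub-cases gives the compact formula $\beta_{l,E}-\delta_{l,1}\gamma_{l,E}$ after cancellations, which combined with the unit-variance noise term yields the denominator of \eqref{eq:SINR_E2}.

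The main technical obstacle is the PMRT variance computation in the $l\in\Mone$ case of $\mathbb{E}\{|\BUE|^2\}$: there the quadratic form $\hat{\qh}_{l,1}^{\mathrm{H}}\qB_l\hat{\qh}_{l,1}$ couples the same random vector on both sides of the projection, so I cannot treat it merely through $\mathbb{E}\{\|\qw_{l,1}^{\PMRT}\|^2\}=1$. The clean calculation hinges on conditioning on $\qB_l$ (which is independent of $\hat{\qh}_{l,1}$ since $1\notin\Sl$), representing $\|\qB_l\hat{\qh}_{l,1}\|^2$ as $\gamma_{l,1}/2$ times a chi-squared with $2(M-\tausl)$ degrees of freedom, and then using its fourth-moment expression. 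Once this piece is handled, the rest is bookkeeping across the $\Zone/\Mone$ and $t=1/t\neq 1$ partitions.
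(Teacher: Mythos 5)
Your proposal is correct and follows essentially the same route as the paper's Appendix B: the pivotal identity $\hat{\qh}_{l,E}=\sqrt{\alpha_{l,1}}\,\hat{\qh}_{l,1}$ with $\gamma_{l,E}=\alpha_{l,1}\gamma_{l,1}$ (your $\sqrt{\gamma_{l,E}/\gamma_{l,1}}$ factor), the decomposition of $\qh_{l,E}$ into this correlated estimate plus an independent zero-mean error of variance $\beta_{l,E}-\gamma_{l,E}$, the case split over $\Zone/\Mone$ and $t\in\Sl$ versus $t\in\mathcal{W}_l$, and the fourth-moment evaluation of the PMRT quadratic form $\hat{\qh}_{l,1}^{\mathrm{H}}\qB_l\hat{\qh}_{l,1}$ yielding $\gamma_{l,E}(M-\tausl+1)$. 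Your per-AP mean-plus-variance bookkeeping is just a tidier reorganization of the paper's direct expansion of the squared sum (including its explicit cross terms), not a different argument.
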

\begin{proof}
See Appendix~\ref{app:SINRE}.
\end{proof}


\subsection{Secrecy Spectral Efficiency}
Using the derived  SE expressions, we can now calculate  the SSE associated with user $1$ as 
\begin{align}~\label{eq:R_sec} 
R_{sec}= [R_1-R_E]^+  \approx  \left[\log_{2}\left(\frac{1+\SINRone}{1+ \SINRE} \right)\right]^+,
\end{align}
where  $[x]^+ = \max \{0, x\}$. 


\section{Access Point Selection and Power Optimization}\label{sec:APsel_PA}
\subsection{Access Point Selection}
In this subsection, we propose an AP selection
scheme which can increase the SSE. The proposed scheme is based on the following observations:
Firstly, for  user $1$, there are some APs which are located
far away and will not add significantly to
its overall SE. Secondly, there are some APs which are in close vicinity of eavesdropper. Serving user $1$ by these APs may result in high values for the overheard SINR.  

Therefore, in order to increase the SSE, user $1$ should not be served by all APs. To this end, we propose a greedy large-scale-based  AP selection scheme for choosing a group of APs for serving user $1$ in Algorithm~\ref{alg:GR}. We consider the ratio $\zeta(l)=\frac{\beta_{l, 1}}{\betalE}$ as the criterion, order the APs based on the $\zeta(l)$ in a descending order and then create the set $\mathcal{A}=\{l^{(1)}, \ldots, l^{(L)} \}$.
Let $\mathcal{T}$ be the set
of assigned APs to user $1$ and $R_{sec}(\mathcal{T})$ denote the dependence of the SSE on the different choices of $\mathcal{T}$. The key idea of Algorithm~\ref{alg:GR} is to iteratively select a subset of APs out of the ordered AP set $\mathcal{A}$  on the condition that a new AP assignment at each iterative step improves the SSE. A key characteristic of the proposed AP selection is that it only
requires the large scale fading (path loss) between the APs and eavesdropper. Additionally, since it is performed only on a large-scale time scale, it avoids the need of frequently performing the AP selection.
{\begin{remark}
   In Algorithm~\ref{alg:GR}, the dominant term in calculating  $R_{sec}$ given in~\eqref{eq:R_sec} is the double summation in the denominator of \eqref{eq:SINR_k2} whose complexity scales as $\mathcal{O}\left(L × K\right)$. The operation of computing $\zeta\left(l\right)$ has complexity $\mathcal{O}\left(L\right)$, while sorting the order index of $L$ APs is performed with complexity $\mathcal{O}\left(L \log L\right)$.  Therefore, the computational complexity of the algorithm is in the order of $\mathcal{O}\left(L × K\right)$.
\end{remark}
\begin{algorithm}[!t]
\caption{Greedy AP Selection}
\begin{algorithmic}[1]
\label{alg:GR} 
\STATE
\textbf{Initialize}:  Set  $\mathcal{T}=\emptyset$ and $R_{sec}^{0,\star}=0$.
\STATE Calculate the ratio $\zeta(l)=\frac{\beta_{l, 1}}{\betalE}$ for all APs. Then, order the APs based on the ratio $\zeta(l)$ in a descending order and create the set $\mathcal{A}=\{l^{(1)}, \ldots, l^{(L)} \}$. 
\FOR {$i=1$ to $L$}
\STATE  Calculate  $R_{sec}^{i,\star}=R_{sec}(\mathcal{T}\cup l^{(i)})$ 
\IF{$R_{sec}^{i,\star} > R_{sec}^{i-1,\star}$}
\STATE Update $\mathcal{T}=\mathcal{T} \cup\{l^{(1)}\}$
\ENDIF
\ENDFOR
\end{algorithmic}
\end{algorithm}

\subsection{Power Optimization}
Here, we aim at optimally selecting the power coefficients $\rholk$  to maximize the SE (and hence, the SINR) at user $1$,  subject to a maximum
allowable SINR at the eavesdropper and maximum transmit power at each AP while guaranteeing
the specific QoS requirements on each user $k$, $k \in \mathcal{K}\backslash \{1\}$.
Note that we  consider average normalized power constraint at each AP, i.e., $\mathbb{E}\left\{\Vert \qx_l\Vert^2\right\}\leq \rho_{\max }$ with $\rho_{\max }=P_{\max } / N_0$, which, using~\eqref{eq:x_l}, can  be further
expressed as the following per-AP power constraint
\begin{align} ~\label{eq:Ex_l} 
\sum_{k=1}^K \rholk \leq\rho_{\max }.
\end{align}
More precisely, the power optimization problem is formulated as  
\begin{subequations}\label{eq:Opt1}
\	\begin{align}
		&\underset{\{\rholk\}}{\mathrm{max}}\,\, \hspace{1em}
		 \,\,\SINRone,
		\\
  &~\text {s.t.}   \hspace{1em}\sum_{k=1}^K \rholk \leq\rho_{\max }, ~~ \quad \forall k, l, \\ 
    &   \hspace{3em} \SINRk  \geq\theta_{k }, ~~ \quad \forall k, k \neq 1,  \\
        &   \hspace{3em} \SINRE  \leq \theta_{E}, ~~ \quad 
  	\end{align}
 \end{subequations}
 where $\theta_k$ and $\theta_{E}$ are the minimum SINRs required by user $k$  and the maximum allowable overheard SINR at eavesdropper, respectively.
In order to facilitate further analysis, let us denote   the power allocation coefficient matrix by $\boldsymbol{\Psi}$ with elements $\boldsymbol{\Psi}(l, k)=\sqrt{\rholk}~~ \forall l, k$. The $k$-th column vector of $\boldsymbol{\Psi}$ is denoted as
\begin{align}~\label{eq:uk} 
\mathbf{u}_k=\boldsymbol{\Psi}(:, k)=\left[\sqrt{\rho_{1, k}}, \sqrt{\rho_{2, k}}, \ldots, \sqrt{\rho_{L, k}}\right]^T.
\end{align}
Furthermore, we  define the following matrices and vectors
\begin{align}~\label{eq:ak} 
\mathbf{a}_k  =&\Big[\sqrt{(M-\mathcal{S}_1)\gamma_{1, k}}, \sqrt{(M-\mathcal{S}_2)\gamma_{2, k}}, \ldots, \nonumber\\
&\qquad\sqrt{(M-\mathcal{S}_L)\gamma_{L, k}}\Big]^T,
\end{align}
\begin{align}~\label{eq:Akk} 
\mathbf{A}_{k, k} & = \operatorname{diag}\left(\sqrt{\beta_{1, k}-\delta_{1,k} \gamma_{1, k}}, \ldots, \sqrt{\beta_{L, k} -\delta_{L,k}\gamma_{L, k}}\right),
\end{align}
\begin{align}~\label{eq:AE} 
\mathbf{b}_{\mathrm{E}}  =&\Big[\sqrt{(M-\mathcal{S}_1)\gamma_{1, E}}, \sqrt{(M-\mathcal{S}_2)\gamma_{2, E}}, \ldots, \nonumber\\
&\qquad\sqrt{(M-\mathcal{S}_L)\gamma_{L, E}}\Big]^T,
\end{align}
and
\begin{align}~\label{eq:BE} 
\mathbf{B}_{E} & =\operatorname{diag}\left(\sqrt{\beta_{1, E}-\delta_{1,1} \gamma_{1, E}}, \ldots, \sqrt{\beta_{L, E} -\delta_{L,1}\gamma_{L, E}}\right).
\end{align}
Accordingly, using~\eqref{eq:uk}-\eqref{eq:BE}, the received SINRs at user $1$ and eavesdropper in \eqref{eq:SINR_k2} and \eqref{eq:SINR_E2} can be rewritten as 
\begin{align}~\label{eq:SINRk3} 
\SINRk & =\frac{\left(\mathbf{a}_k^T \mathbf{u}_k\right)^2 } {\varphi_k(\boldsymbol{\Psi})},
\end{align}
and
\begin{align} ~\label{eq:SINRE3} 
\SINRE & =\frac{(\mathbf{b}_{\mathrm{E}} \mathbf{u}_1)^2 + \left\|\mathbf{B}_{E} \mathbf{u}_{1}\right\|^2}
{\varphi_{E}(\boldsymbol{\Psi})},
\end{align}
respectively, where
\begin{align}~\label{eq:phik} 
& \varphi_k(\boldsymbol{\Psi})=\sum_{t=1}^K\left\|\mathbf{A}_{k, k} \mathbf{u}_{t}\right\|^2+1, \quad k \in \mathcal{K},
\end{align}
\begin{align}~\label{eq:phiE} 
&\!\!\!\!\!\!\!\!\!\!\!\!\!\!\!\!\!\!\!\!\!\!\!\!\!\!\varphi_{E}(\boldsymbol{\Psi})=\sum_{t=2}^K\left\|\mathbf{B}_{E} \mathbf{u}_{t}\right\|^2+1.
\end{align}

Now, problem~\eqref{eq:Opt1} can be transformed into a more tractable form as follows
\begin{subequations}\label{eq:Opt2}
\	\begin{align}
		&\underset{\boldsymbol{\Psi}}{\mathrm{max}}\,\, \hspace{1em}
		 \,\, \frac{\left(\mathbf{a}_1^T \mathbf{u}_1\right)^2}{  \varphi_1(\boldsymbol{\Psi})}
		\\
  &~\text {s.t.} \hspace{1em}  \sum_{k=1}^K \boldsymbol{\Psi}^2(l, k) \leq \rho_{\max }, \quad l \in \mathcal{L}\label{eq:Opt2_c1}\\ 
    &   \hspace{2em} \frac{\left(\mathbf{a}_k^T \mathbf{u}_k\right)^2}{\varphi_k(\boldsymbol{\Psi})} \geq \theta_k ~~ \quad \forall k, k \neq 1, \label{eq:Opt2_c2} \\
        &   \hspace{2em} \frac{\left(\mathbf{b}_{\mathrm{E}} \mathbf{u}_1\right)^2 + \left\|\mathbf{B}_{E} \mathbf{u}_{1}\right\|^2}{\varphi_{\mathrm{E}}(\boldsymbol{\Psi})} \leq \theta_{E}. \label{eq:Opt2_c3}
  	\end{align}
 \end{subequations}
\begin{algorithm}[!t]
\caption{Path-Following Algorithm for Solving~\eqref{eq:Opt3}}
\begin{algorithmic}[1]
\label{alg:Opt}
    \STATE \textbf{Initialize}:  $\kappa=0$, a feasible point $\boldsymbol{\Psi}^{(0)}$ for~\eqref{eq:Opt3}.
\REPEAT
\STATE{Update $\kappa:=\kappa+1$.}
\STATE\text{Solve~\eqref{eq:Opt3} to obtain the optimized solution} $\boldsymbol{\Psi}^{*}$.
\STATE Update $\boldsymbol{\Psi}^{(\kappa)}=\boldsymbol{\Psi}^{*}$ 
\UNTIL{ Convergence.}
\STATE  \text{Return $\boldsymbol{\Psi}^{(\kappa)}$.}
\end{algorithmic}
\end{algorithm}

Problem~\eqref{eq:Opt2} is a  nonconvex optimization
problem. Therefore, in what follows, we
 use a path-following iterative algorithm~\cite{Nasir:2017:TSP} to solve the  problem efficiently.
The first constraint in~\eqref{eq:Opt2} is obviously convex, while the second constraint  can be written as
\begin{align} \label{eq:thetak2}
\frac{1}{\sqrt{\theta_k}} \mathbf{a}_k^T \mathbf{u}_k \geq \sqrt{\varphi_k(\boldsymbol{\Psi})}, \quad k \in \mathcal{K} \backslash\{1\},
\end{align}
which is a second-order cone (SOC) constraint and is convex. Let $\boldsymbol{\Psi}^{(\kappa)}$ denote a feasible point for~\eqref{eq:Opt2} found from the $(\kappa-1)$-th iteration. By invoking the following upper bound
\begin{align} \label{eq:inequality}
\frac{x^2}{y} \geq 2 \frac{\bar{x}}{\bar{y}} x-\frac{\bar{x}^2}{\bar{y}^2} y, \quad \forall x>0, y>0, \bar{x}>0, \bar{y}>0,
\end{align}
we obtain
\begin{align} \label{eq:thetak3}
\frac{\left(\mathbf{a}_1^T \mathbf{u}_1\right)^2}{\varphi_1(\boldsymbol{\Psi})} \geq f_1^{(\kappa)}(\boldsymbol{\Psi}) \triangleq a^{(\kappa)} \mathbf{a}_1^T \mathbf{u}_1-b^{(\kappa)} \varphi_1(\boldsymbol{\Psi}),
\end{align}
with
\begin{align} \label{eq:akbk}
a^{(\kappa)}=2 \frac{\left(\mathbf{a}_1^T \mathbf{u}_1^{(\kappa)}\right)^2}{\varphi_1\left(\mathbf{\Psi}^{(\kappa)}\right)}, \quad b^{(\kappa)}=\left(a^{(\kappa)} / 2\right)^2.
\end{align}
Therefore, the objective function  $\left(\mathbf{a}_1^T \mathbf{u}_1\right)^2 / \varphi_1(\boldsymbol{\Psi})$  in~\eqref{eq:Opt2} can be replaced by $f_1^{(\kappa)}(\boldsymbol{\Psi})$. In addition,  since the function $\varphi_{\mathrm{E}}(\boldsymbol{\Psi})$  is convex,  we can use the first-order Taylor approximation of
$\varphi_{\mathrm{E}}(\boldsymbol{\Psi})$ near $\boldsymbol{\Psi}^{(k)}$ to handle the non-convex constraint~\eqref{eq:Opt2_c3}. In particular, constraint~\eqref{eq:Opt2_c3} 
can be  approximated by the convex quadratic constraint 
\begin{align} \label{eq:thetaE2}
\frac{(\mathbf{b}_{\mathrm{E}} \mathbf{u}_1)^2 + \left\|\mathbf{B}_{E} \mathbf{u}_{1}\right\|^2}{ \theta_{E}} \leq \varphi_{E}^{(\kappa)}(\boldsymbol{\Psi}),
\end{align}
with
\begin{align} \label{eq:apsiE2}
\varphi_{\mathrm{E}}^{(\kappa)}(\boldsymbol{\Psi}) \triangleq \sum_{k=2}^K\left[\mathbf{u}_k^{(\kappa)^T} \mathbf{B}_E^2\left(2 \mathbf{u}_k-\mathbf{u}_k^{(\kappa)}\right)\right]+1.
\end{align}
At iteration $(\kappa + 1)$, for a given point $\boldsymbol{\Psi}^{(\kappa)}$, the optimization problem~\eqref{eq:Opt2} can finally be approximated by the following convex problem:
\begin{subequations}\label{eq:Opt3}
\	\begin{align}
		&\underset{\boldsymbol{\Psi}}{\mathrm{max}}\,\, \hspace{1em}
		 \,\, f_1^{(\kappa)}(\boldsymbol{\Psi}),
		\\
  &~\text {s.t.} \hspace{1em}  \sum_{k=1}^K \boldsymbol{\Psi}^2(l, k) \leq \rho_{\max }, \quad l \in \mathcal{L}\label{eq:Opt3_c1},
  \\ 
    &   \hspace{2em} \frac{1}{\sqrt{\theta_k}} \mathbf{a}_k^T \mathbf{u}_k \geq \sqrt{\varphi_k(\boldsymbol{\Psi})}, \quad k \in \mathcal{K} \backslash\{1\}, \label{eq:Opt3_c2} 
    \\
        &   \hspace{2em} \frac{(\mathbf{b}_{\mathrm{E}} \mathbf{u}_1)^2 + \left\|\mathbf{B}_{E} \mathbf{u}_{1}\right\|^2} {\theta_{E} }\leq \varphi_{E}^{(\kappa)}(\boldsymbol{\Psi}). \label{eq:Opt3_c3}
  	\end{align}
 \end{subequations}

Starting with a feasible point $\boldsymbol{\Psi}^{(0)}$, we solve~\eqref{eq:Opt3} to obtain its optimized solution $\boldsymbol{\Psi}^*$, and use $\boldsymbol{\Psi}^*$
as an initial point in the next iteration. The detailed algorithm for
solving~\eqref{eq:Opt3} is provided in Algorithm~\ref{alg:Opt} where the algorithm terminates when an accuracy level is obtained.
\begin{remark}
    Algorithm~\ref{alg:Opt} requires a feasible point $\boldsymbol{\Psi}^{(0)}$ to initialize the procedure. To this end, we first find a feasible point $\boldsymbol{\Psi}^{(0)}$ for the constraints~\eqref{eq:Opt2_c1} and~\eqref{eq:Opt2_c2} and then iteratively solve the  optimization problem  
\begin{subequations}\label{eq:Opt4}
\	\begin{align}
		&\underset{\boldsymbol{\Psi}}{\mathrm{min}}\,\, \hspace{1em}
		 \,\, \frac{(\mathbf{b}_{\mathrm{E}} \mathbf{u}_1)^2 + \left\|\mathbf{B}_{E} \mathbf{u}_{1}\right\|^2}{ \theta_{E}}-\varphi_{\mathrm{E}}^{(\kappa)}(\boldsymbol{\Psi}), 
		\\
  &~\text {s.t.} \hspace{2em}  \eqref{eq:Opt2_c1}-\eqref{eq:Opt2_c2},
  	\end{align}
 \end{subequations}
until the following requirement has been satisfied.
\begin{align} \label{eq:maxapsi2}
\left((\mathbf{b}_{\mathrm{E}} \mathbf{u}_1\right)^2 + \left\|\mathbf{B}_{E} \mathbf{u}_{1}\right\|^2) / \theta_{E}-\varphi_{\mathrm{E}}\left(\boldsymbol{\Psi}^{(\kappa)}\right) \leq 0.
\end{align}
In this case, $\boldsymbol{\Psi}^{(\kappa)}$ is a feasible solution for Problem~\eqref{eq:Opt3}.\\
\end{remark}
\begin{remark}
     Algorithm~\ref{alg:Opt} solves a convex problem at each iteration which involves $A_s=LK$ real-valued scalar variables and $A_q=L+K$ quadratic constraints. Therefore,  the per-iteration complexity of Algorithm~\ref{alg:Opt} is  $\mathcal{O}\big((A_s)^2A_q^{2.5}+A_q^{3.5}\big)$~\cite{Tam:TWC:2017}.
\end{remark}

\section{Detection of Eavesdropping Attack}\label{sec:Eav_detection}
In order to effectively design the AP selection and power allocation as in Section~\ref{sec:APsel_PA}, it is crucial to identify the presence of eavesdroppers in our system and determine which user is being targeted. Thus, it is of importance to have some robust mechanism for detecting  eavesdropping attacks. 
To date, there are a few works on pilot spoofing attack detection. Among them, the authors in~\cite{Qiu:Access:2018} used Gaussian mixture models to identify spoofing attacks, while an additional
random training phase after the conventional
uplink pilot training phase was considered in~\cite{Tian:2017:ACCESS} to identify the attack.
In addition, a two-way training method was
proposed in~\cite{Xiong:TIFS:2016} for attack detection which requires additional downlink training from AP to the legitimate user. The authors in~\cite{Im:TWC:2015} utilized the secret key arrangement protocol  to detect the eavesdropping attack, while the proposed protocol requires several uplink and downlink transmissions among legitimate transmitter and user. A three-step training scheme was proposed in~\cite{Liu:Access:2021} for detecting pilot spoofing attacks in reconfigurable intelligent surface (RIS)-assisted systems, while a two-step training scheme were proposed in~\cite{Xu:icc:2019} and~\cite{Kapetanović:2013:PIMRC} for detection of pilot spoofing attacks in massive MIMO systems, respectively. 
Although the above works make important steps towards
detecting the pilot spoofing attack, they 
investigate simple setups concerning the APs and/or users. More specifically, a
popular assumption in the aforementioned literature is that
there is a single AP and/or a single user in the network. However, in more complicated scenarios such as CF-mMIMO systems, there are multiple cooperating APs  serving multiple users. Therefore, we cannot straightforwardly apply
the proposed detection schemes in the above mentioned works to detect the eavesdropper in our system.}
In \cite{Hoang:TCOM:2018}, the authors proposed a simple method to identify abnormality in pilot training. However, this scheme requires all APs to exchange their received pilot signals, leading to substantial fronthaul demands and increased information exchange overhead. In addition, it is not clear how to obtain the average power of the received pilot signals at the central processing unit (CPU). In this work, we propose a new and efficient method which is also based on the received pilot signals, but can be implemented at each AP. Moreover, the APs do not possess prior knowledge of the users' instantaneous CSI to implement our proposed method.

To identify if user $k$ is overheard by an eavesdropper or not, we consider two hypotheses $\mathcal{H}_{k,0}$ and $\mathcal{H}_{k,1}$, 
 where the latter represents the scenario with an active eavesdropper, while the former represents the scenario without any active eavesdropper. Let $\qy_{l,k}$ denote the projection of the received pilot vector at AP $l$ onto $\Bphik$ (i.e. $\mathbf y_{l, k}= \mathbf Y_{p, l}\Bphik$),  given by
 	\begin{align}~\label{eq:yklm} 
 		\mathbf y_{l, k}= 
 		\begin{cases}\sqrt{\tau_{p} \rho_{u}} \hlmk+ \mathbf{N}_{l}\Bphik, & \mathcal{H}_{k,0}, \\ \sqrt{\tau_{p} \rho_{u}} \hlmk+\sqrt{\tau_{p} \rho_{E}} \hlmE\!+\mathbf{N}_{l}\Bphik, & \mathcal{H}_{k,1}.
 		\end{cases}
 	\end{align}
Then, the average power of the $m$-th element of $\qy_{l,k}$ is given by
 	\begin{align}~\label{eq:power_yklm1} 
 		\Ex \left\{\left| y_{l, k,m} \right|^2\right\}= 
 		\begin{cases}\tau_{p} \rho_{u}\Betalk + 1, & \mathcal{H}_{k,0}, \\ \tau_{p} \rho_{u} \Betalk +\tau_{p} \rho_{E} \betalE\!+1, & \mathcal{H}_{k,1}.
 		\end{cases}
 	\end{align}
From \eqref{eq:power_yklm1}, if AP $l$ knows $\Ex \left\{\left| y_{l, k,m} \right|^2\right\}$, then it can determine if there exists an eavesdropper in the system. More precisely, AP $l$ conducts a comparison between $\Ex \left\{\left| y_{l, k,m} \right|^2\right\}$ and $\tau_{p} \rho_{u}\Betalk + 1$. If these values do not equate, then it indicates the existence of an eavesdropping attempt targeted at user $k$. Otherwise, there is no such eavesdropper. 

In practice, AP $l$ does not  exactly know $\Ex \left\{\left| y_{l, k,m} \right|^2\right\}$. Thus, we propose a simple method to estimate $\Ex \left\{\left| y_{l, k,m} \right|^2\right\}$. The proposed scheme uses the sample average power of $y_{l, k,m}$  as follows:
 	\begin{align}~\label{eq:power_yklm2} 
 		\xi_{l,k,m}= 
 		\frac{\sum_{n=1}^{N_{\text{cb}}} \|\qy_{l,k}(n)\|^2 }{MN_{\text{cb}}},
 	\end{align}
where $N_{\text{cb}}$ is the number of coherence bandwidth intervals within a whole system bandwidth. Since $\qy_{l,k}(n)$ includes i.i.d. elements, $\xi_{l,k,m} \rightarrow \Ex \left\{\left| y_{l, k,m} \right|^2\right\}$ as $MN_{\text{cb}} \to \infty$.

\begin{remark}
    In the 5G NR structure, the system bandwidth is 100 MHz and coherence bandwidth is about 360 KHz. Thus, the number of coherence bandwidth intervals is $N_{\text{cb}}=100\times 10^3/360 = 277$. In addition, the number of antennas per each APs, $M$, typically ranges from 2 to 10. As a consequence, $MN_{\text{cb}}$ is around 544-2770 which is sufficiently large enough to ensure that $\xi_{l,k,m}$ is very close to $ \Ex \left\{\left| y_{l, k,m} \right|^2\right\}$.
\end{remark}
\begin{figure}[!t]
			\centering 
			\includegraphics[width=0.50\textwidth]{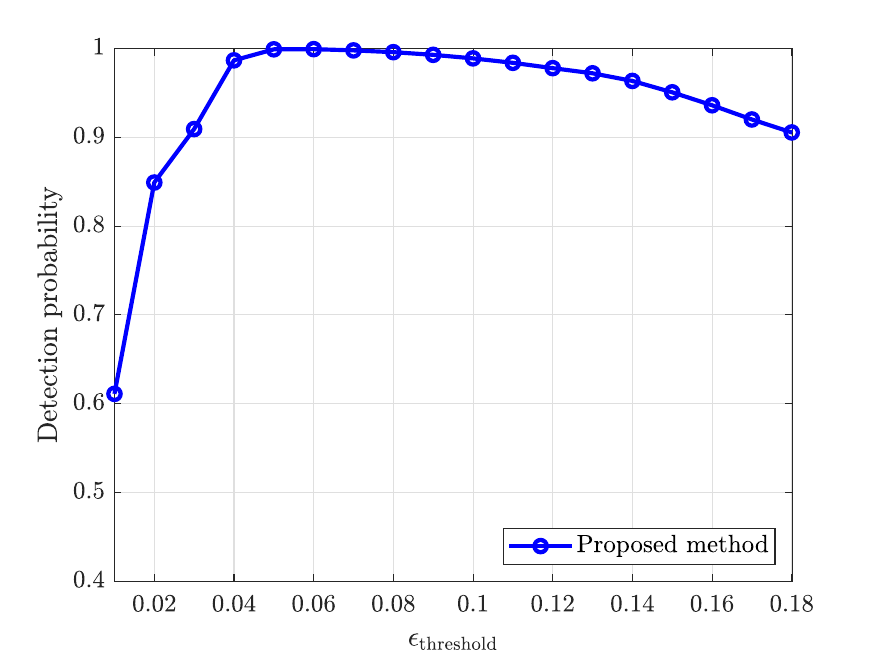}
			\caption{Probability of active eavesdropping attack
detection as a function of $\epsilon_{\text{threshold}}$. }
			\label{fig2}
\end{figure}
By using \eqref{eq:power_yklm2}, we can estimate $\Ex \left\{\left| y_{l, k,m} \right|^2\right\}$. Intuitively, if $\xi_{l,k,m}$ is close to $\tau_{p} \rho_{u}\Betalk + 1$, then we can conclude that there is no eavesdropper attacking user $k$. Building upon this insight, we propose a simple method to identify the presence of eavesdroppers in our system and determine which user is being targeted, outlined as follows:
\begin{itemize}
    \item \textbf{Step 1}: At AP $l$, compute $\xi_{l,k,m}$ using \eqref{eq:power_yklm2}.
    \item \textbf{Step 2}: Compute the ratio $$\upsilon_{l,k,m}=\xi_{l,k,m}/\left(\tau_{p} \rho_{u}\Betalk + 1\right),$$ and choose a threshold $\epsilon_{\text{threshold}}$. If $\left|\upsilon_{l,k,m}-1\right| \leq \epsilon_{\text{threshold}}$, then  AP $l$ decides that there is no eavesdropper attacking user $k$. Otherwise, user $k$ is overheard by an eavesdropper. 
   \item \textbf{Step 3}: If a majority of the APs determine that user $k$ is being overheard by an eavesdropper, then we conclude that the system is overheard by an eavesdropper, and user $k$ is  specifically targeted. Otherwise, we can conclude that there is no eavesdropper present in the system attacking user $k$.  
\end{itemize}

Figure~\ref{fig2} shows the detection probability  of an active eavesdropping attack relying on our proposed method for various  values of $\epsilon_{\text{threshold}}$. The simulation settings  resemble those explained in Section~\ref{Section:Simulation} with $L=10$,  $M=2$, and $K=4$.
It is seen that the proposed method works very well (i.e., the detection probability is close to $1$). This implies that our  assumption that the system knows the presence of an eavesdropper is reasonable. 
In addition, we observe that a trade-off exists between $\epsilon_{\text{threshold}}$ and the detection probability. More specifically, first, as $\epsilon_{\text{threshold}}$ increases, the detection probability increases but
it then starts decreasing as $\epsilon_{\text{threshold}}$ increases beyond the optimized
value. The intuitive reason is that a large $\epsilon_{\text{threshold}}$ can lead to misidentifying a user as attacked due to the fact that for the majority of the APs we would have $\left|\upsilon_{l,k,m}-1\right| \leq \epsilon_{\text{threshold}}$, while a small $\epsilon_{\text{threshold}}$ may result in falsely dismissing the presence of an eavesdropper. Therefore, it is important to optimize $\epsilon_{\text{threshold}}$ to
maximize the detection probability. For this network setup, its turns out that the optimized values for  $\epsilon_{\text{threshold}}$ are between $0.05$ to $0.08$.

\section{Simulation Results}\label{Section:Simulation}
In this section, we provide numerical results to study the secrecy performance of CF-mMIMO with PPZF  precoding in scenarios involving an active eavesdropper, as well as to verify the benefit of our AP selection and power allocation schemes. We also include MRT precoding as a benchmark for comparison.
The APs and the users are randomly located within a square  of size $1 \times 1$ km$^2$, which is wrapped around at the edges to avoid the boundary effects.  Moreover, the eavesdropper is randomly located inside a circle with radius $r$ around  user $1$. The  large-scale fading is modelled as 
\begin{align}
\beta_{l,k} = 10^{\frac{\text{PL}_{l,k}^d}{10}}10^{\frac{F_{l,k}}{10}},
\end{align}
where $10^{\frac{\text{PL}_{l,k}^d}{10}}$ denotes the path loss, and $10^{\frac{F_{l,k}}{10}}$ depicts the shadowing effect with $F_{l,k}\in\mathcal{N}(0,4^2)$ (in dB).  Moreover, $\text{PL}_{l,k}^d$ (in dB) is given by
\begin{align}
\label{PL:model}
\text{PL}_{l,k}^d = -30.5-36.7\log_{10}\Big(\frac{d_{l,k}}{1\,\text{m}}\Big),
\end{align}
and the correlation among the shadowing terms from the AP $l$ to different users $k$ is given by  
\begin{align}
	 \mathbb{E}\{F_{l,k}F_{j,k'}\} \triangleq \begin{cases} 4^22^{-\zeta_{k,k'}/9\,\text{m}}, & j = l, \\ 0, & \text{otherwise}, \end{cases}
\end{align}
where $\zeta_{k,k'}$ is the physical distance between users $k$ and $k'$ \cite{Emil:TWC:2020}. 
In addition, we choose the bandwidth $B=20$ MHz,  a noise power equal to  $-92$ dBm, and the maximum
transmit power for each AP and each user  $200$ mW  and $100$ mW, respectively. Also, $\rho_E$ = $\rho_u$ and $\tau_{p}=K$. In all figures, we show the average SSE, while the average is taken over the large-scale fading.

\begin{figure}[!t]
 			\centering 
 			\includegraphics[width=0.50\textwidth]{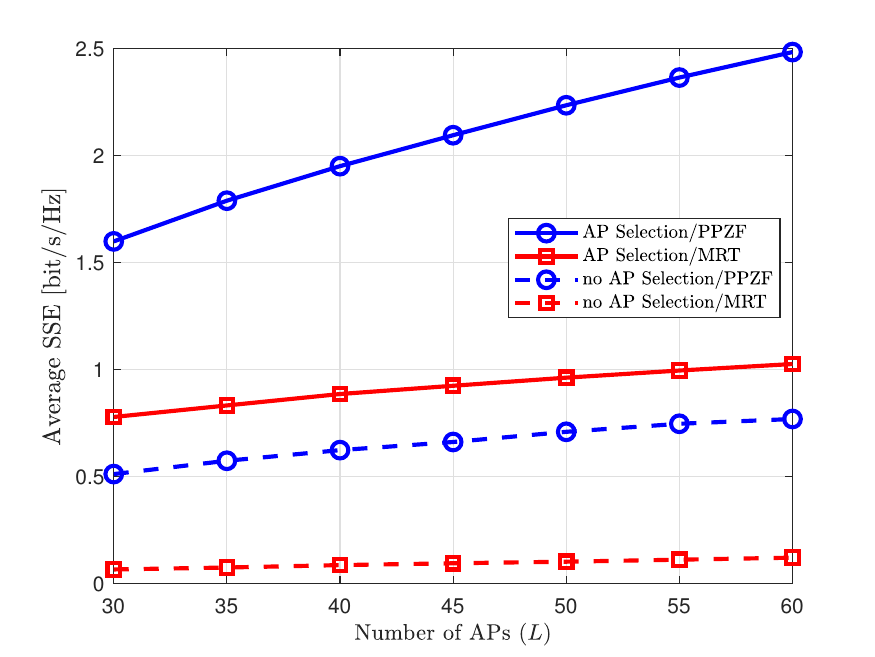}
 			\caption{SSE versus $L$ for PZZF and MRT precoding schemes with AP selection and without it. Here, $M=4$, $K=10$, and $r = 100$ m.}
 			\label{fig3}
\end{figure}


\subsection{Performance Evaluation}

\emph {1) Performance of the Proposed AP Selection}: 
First, we investigate the performance of the proposed AP selection in Algorithm~\ref{alg:GR} for different  precoding schemes.  Figure~\ref{fig3} shows the SSE versus the number of APs, $L$. From this figure, we can draw
the following conclusions:
\begin{itemize}
\item The proposed AP selection provides a noticeable secrecy improvement. More specifically, it provides performance gains of up to $220\%$ and $730\%$
for the CF-mMIMO system with PPZF and MRT precoding schemes, respectively. This performance gain is reasonable:   i) primarily due to the fact that in a CF-mMIMO system with distributed APs, there are some APs which are in the
close vicinity of the eavesdropper and hence serving the attacked user $1$ by these APs
may result in high values for the overheard SINR; ii) secondly, there are some APs which are located far away from user $1$ and, thus, will not contribute significantly to its overall SE.
\item The PPZF scheme provides a better SSE performance
than the MRT scheme due to its ability to cancel the inter-user interference.
This result highlights the importance of precoding scheme for secure CF-mMIMO systems. 
\item The gap between PPZF and MRT is always noticeable. Interestingly, PPZF along with the proposed AP selection provides around $2$-fold improvement in the SSE compared to the MRT scheme for scenarios with a varying number of APs. In principle, this is reasonable since compared to MRT, PPZF offers an excellent balance between interference cancellation and the boosting of the desired signal. More importantly, it can provide high interference cancellation over a wide range of the numbers of antennas/active APs. Therefore, with PPZF, the benefits of the AP selection to enhance the SSE are more pronounced.
\end{itemize}
\begin{figure}[!t]
			\centering 
			\includegraphics[width=0.50\textwidth]{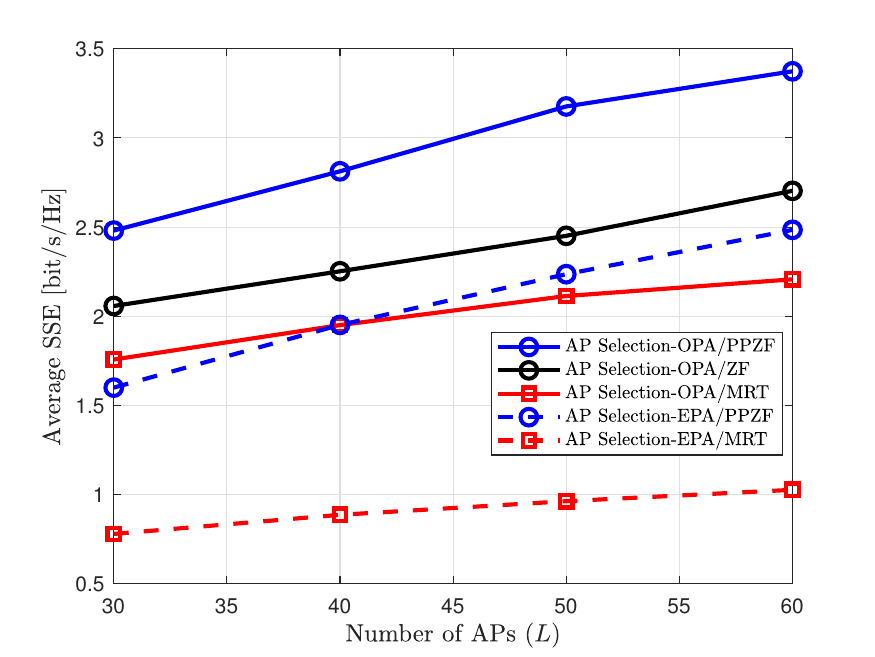}
			\caption{SSE with AP selection and the proposed power allocation versus $L$ for PPZF, ZF, and MRT precoding schemes. Here, $M=4$ and $K=10$, and $r = 100$ m with AP selection scheme, EPA, and OPA.}
			\label{fig4}
\end{figure}

\begin{figure}[t]
\includegraphics[width=0.50\textwidth] {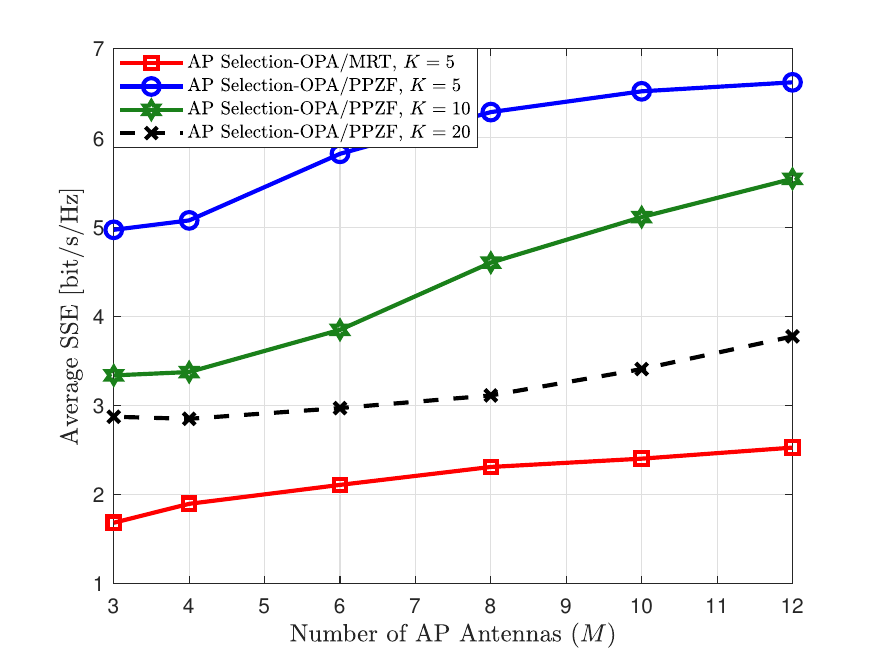}
\caption{SSE with AP selection and the proposed power allocation versus $M$, when $LM=240$ and $r = 100$ m.}\label{fig5}
\end{figure}

\emph {2) Performance of the Proposed Power Optimization}:
Figure~\ref{fig4} shows the advantage of the proposed power allocation as a function of the number of APs for CF-mMIMO system with the proposed AP selection. In this figure, 
``EPA'' corresponds to the case with equal power allocation, i.e., $\rholk=\rho_{\text{max}}/K$ for all $l$ and $k$, while ``OPA'' corresponds to our proposed optimized power allocation algorithm (i.e. Algorithm~\ref{alg:Opt}). Numerical results lead to
the following observations:
\begin{itemize}
\item
 The CF-mMIMO system with the proposed power allocation remarkably outperforms the CF-mMIMO system with the EPA scheme. More specifically, the proposed power allocation in Algorithm~\ref{alg:Opt} provides additional performance gain of around $55 \%$ and $100 \%$
over equal power allocation for a CF-mMIMO system relying on    PPZF  and MRT precoding scheme, respectively, which demonstrates the significance of our power allocation solution.
\item 
Increasing the number of APs along with the proposed power allocation scheme increases the gain of PPZF over MRT. 
\item In Fig.~\ref{fig4}, we also present the SSE performance of the CF-mMIMO system with ZF precoding, where the interference towards all the users is suppressed.  Compared to ZF design,  PPZF offers better SSE performance (around $30$\%). This is reasonable, since ZF avails of a modest array gain of $M-K$ as almost all the DoFs are used to mitigate the interference.
\end{itemize}

\begin{figure}[t]
			\centering 
			\includegraphics[width=0.50\textwidth]{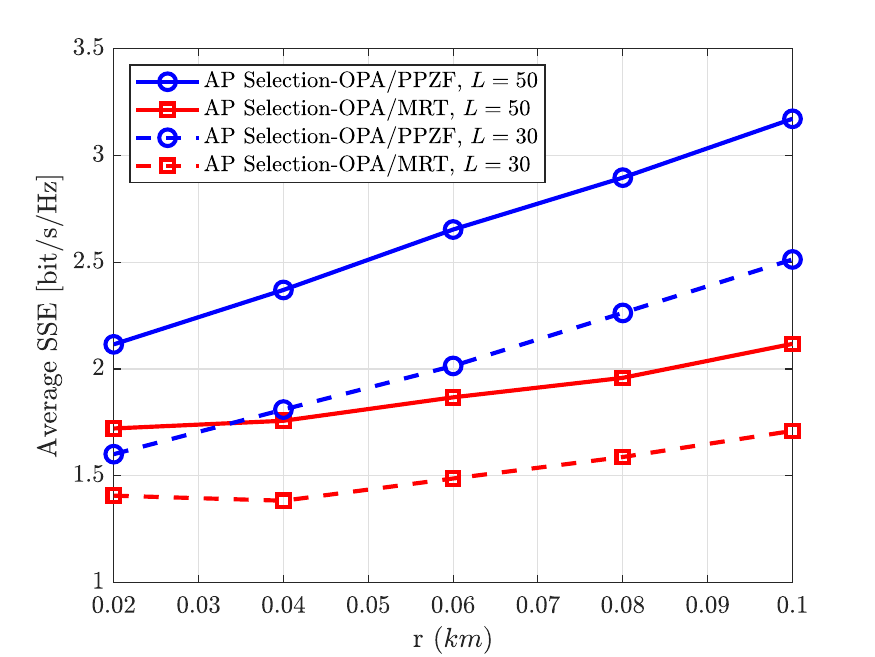}
			\caption{SSE with AP selection and optimized power allocation versus $r$ for $L=30$ and $L=50$, and $K=20$.}
			\label{fig6}
\end{figure}
\begin{figure}[t]
			\centering 
			\includegraphics[width=0.50\textwidth]{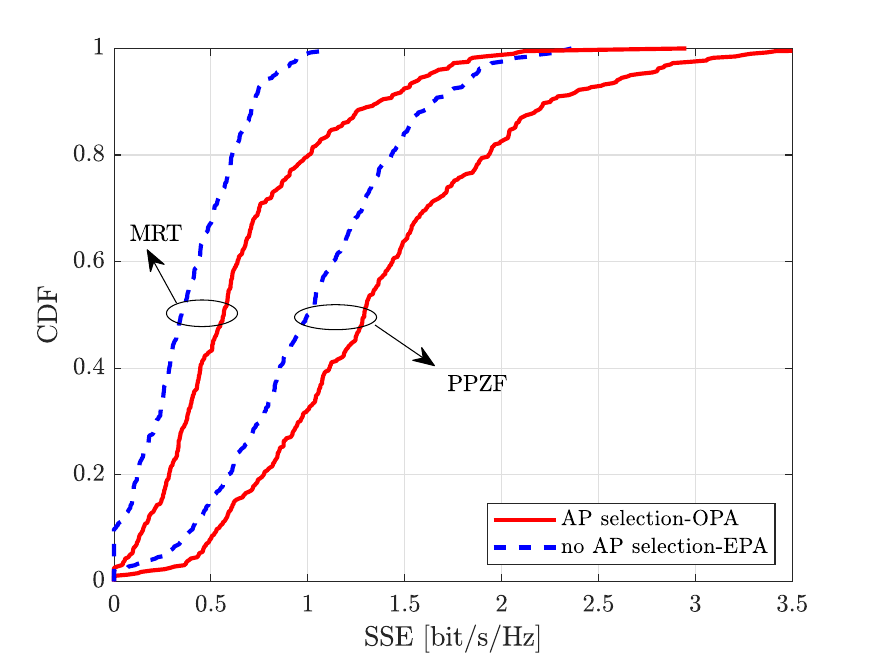}
			\caption{CDF of the SSE with AP selection and optimized power allocation for the QuaDRiGa  channel model, $L=30$, $K=10$, and $r=400$ m.}
			\label{figquad}
\end{figure}
We would like to highlight that the proposed AP selection and power optimization algorithms can be applied to different channel models. More precisely, we can use the analytical expressions in~\eqref{eq:SINRk} and~\eqref{I_E}, instead of the closed-form expressions in~\eqref{eq:SINR_k2} and~\eqref{eq:SINR_E2}, respectively, to numerically implement our proposed AP selection Algorithm 1 and power optimization Algorithm 2.
For example, by using  channel realizations from the $3$GPP   $38.901$  channel model  implemented in QuaDRiGa~\cite{QUADRIGA},  developed by the Fraunhofer Heinrich Hertz Institute, we can obtain the result in Fig.~\ref{figquad}, which verifies the benefit of our proposed scheme. More specifically, the proposed AP selection and power allocation yield a median SSE performance gain   of around $65\%$ and $30\%$ for the CF-mMIMO system with the QuaDRiGa channel model relying on PMRT and PPZF precoding designs, respectively.

\emph{3) Effect of the Number of Antennas Per AP:}
In Fig.~\ref{fig5}, we examine the SSE of CF-mMIMO as a function of the number of antennas per AP, $M$, when the total number of service antennas is fixed, i.e., $LM=240$. For a fixed $LM$,  increasing the number of antennas per AP (reducing the number of APs) has two effects on the SSE, namely,  (i) higher array gain (a positive effect) and (ii) lower degree of macro-diversity and higher path losses (a negative effect).
The positive effect becomes dominant,  which enhances  the SSE performance. In addition, we observe that by
increasing the number of users, $K$, the secrecy performance of the CF-mMIMO system deteriorates. Nevertheless, the CF-mMIMO system comprising 
a high number of users and employing the proposed AP selection and power optimization still yields an excellent SSE performance. Interestingly, the CF-mMIMO system with $20$ users and PPZF scheme offers around $60 \%$ SSE gain compared to the system comprising $5$ users with MRT scheme. This is reasonable because, PPZF  has the ability to cancel the inter-user interference.


\emph{4) Effect of the Eavesdropper Location:}
Figure~\ref{fig6} shows the SSE of CF-mMIMO system as a function of  the position of the eavesdropper, the radius of a circle around user 1, $r$. As expected, the SSE  improves with the increase of $r$.
 We can also see that the secrecy enhancement obtained by using a higher number of APs is more pronounced when the eavesdropper is located closer to the legitimate user $1$.
\subsection{Multiple-Antenna Eavesdropper}
 In the previous sections, we studied  the performance of secure CF-mMIMO under active eavesdropping when the eavesdropper has a single antenna. We now examine how the SSE performance changes as eavesdropper is equipped with multiple antennas.
We consider an eavesdropper with $N_E$ antennas  and  then rewrite the eavesdropper’s receive signals in~\eqref{z_E2} as follows:
\begin{align}\label{z_E3} 
\qz_{E} &= \sum\limits_{l=1}^{L} \!\!
\sqrt{\rho_{l, 1}}\mathbf H_{l,E}^{\text{H}}  \qw_{l,1} s_{1} 
\!+ \!\! {\sum\limits_{\substack{ t \neq 1}}^{K} } \! 
\sum\limits_{l=1}^{L} \!\!
\sqrt{\rholt} \mathbf H_{l,E}^{\mathrm{H}}  \qw_{l,t}s_{t}\! +\! \qn_{E},
\end{align}
where $\qz_E \in \mathcal{C} ^{N_E \times 1}$
represents the eavesdropper's received signals across her $N_E$ antennas, $\qH_{l,E} \in \mathcal{C}^ { M \times N_E}$ is the channel matrix from the $l$-th AP  to the eavesdropper, and $\qn_E \in \mathcal{C}^ {M \times N_E}$
is the additive zero-mean Gaussian noise vector each with variance equal to one. To detect the signal transmitted to user 1, the most effective and low-complexity strategy for the eavesdropper is to employ the maximum-ratio combining (MRC) scheme.
Therefore, the SE at the eavesdropper can be written as
\begin{align} 
R_E^{\mathrm{MRC}} &=\log _2\left(1+\SINRE^{\mathrm{MRC}}\right),
\end{align}
\begin{figure}[!t]
			\centering 
			\includegraphics[width=0.50\textwidth]{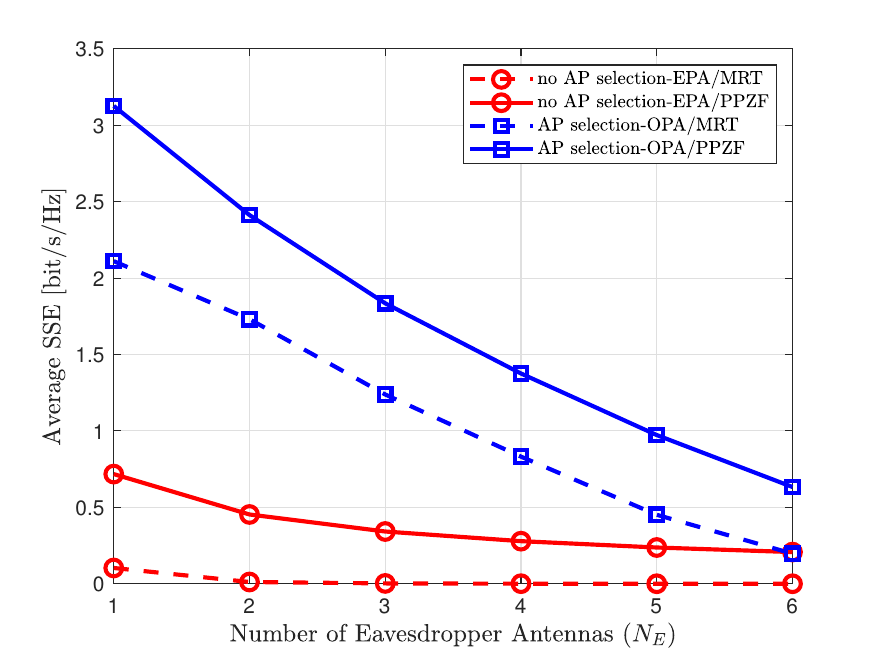}
			\caption{SSE with AP selection and the proposed power allocation versus number of eavesdropper antennas $N_E$ for PPZF and MRT precoding schemes. Here, $L=50$, $M=4$, $K=10$, and $r = 100$ m.}
			\label{fig7}
\end{figure}
where
\begin{align} \label{I_E_2}
\SINRE^{\mathrm{MRC}}\approx\sum_{n=1}^{N_E} \frac {\mathbb{E} \{ |\BUE^n|^{2} \}}{{ \sum\limits_{\substack{ t \neq 1}}^{K} }{{  \mathbb{E} \{|\mathrm {UI}^n_{ \textrm {E}, t}|^{2}}} \}+ 1 }=\sum_{n=1}^{N_E}\SINRE^n,
 \end{align}
where $\SINRE^n$ is the received SINR  at the $n$-th antenna of the eavesdropper given in~\eqref{eq:SINR_E2}, while Eq.~\eqref{I_E_2} follows from the well known MRC result that the collective SNR is the summation of SNRs at each element.

Figure~\ref{fig7} shows the SSE as a function of the number of antennas at eavesdropper, $N_E$. The findings show that increasing the number of eavesdropper antennas has a significant impact on the secrecy performance of CF-mMIMO for both PPZF and MRT precoding schemes. Furthermore, we can infer that the reduction in the SSE attributed to the increased number of eavesdropper antennas is more pronounced when employing
our proposed AP selection and power optimization approaches compared to the scenario where all APs are utilized with equal power allocation.
\begin{figure}[!t]
			\centering 
			\includegraphics[width=0.50\textwidth]{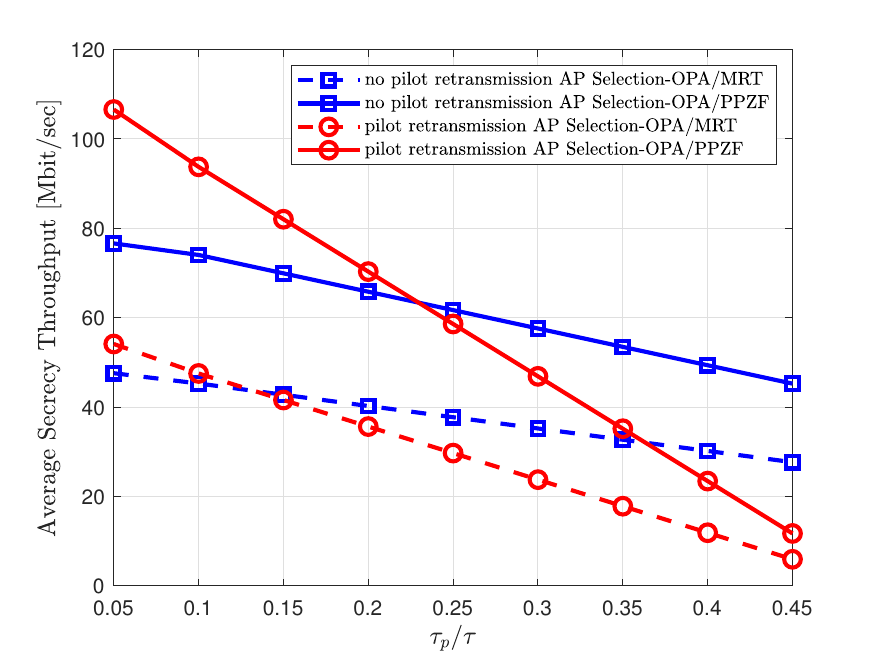}
			\caption{Secrecy throughput with AP selection and the proposed power allocation versus  $\tau_p/\tau$ for PPZF and MRT precoding schemes. Here, $L=50$, $M=4$, $K=10$, $B=20$ MHz, and $r = 200$ m.}
			\label{fig8}
\end{figure}
\subsection{Mitigation of Eavesdropping Attack}
As discussed in Section~\ref{sec:sysmodel}, the pilot spoofing  attack during
the uplink training phase degrades considerably the SSE performance of the system.
Therefore, in this subsection we develop a counter strategy to suppress the effect of pilot spoofing attack. We propose a \emph{pilot re-transmission scheme} where the pilot sequences will be re-transmitted when the eavesdropping attack has been detected using the proposed scheme in Section~\ref{sec:Eav_detection}.
We consider the worst case where the eavesdropper has prior knowledge of the  pilot sequences and tries to launch a pilot spoofing attack using  those training sequences. In this case, the CF-mMIMO system can adapt the training sequences based on the knowledge of the current pilot transmission rather than merely transmitting them randomly.
Our proposed pilot re-transmission scheme is outlined as follows:
\begin{itemize}
\item \textbf{Initialization}: Choose the value of pilot length $\tau_p$ and set the number of the pilot re-transmission rounds as $N_{\mathrm{pr}}=1$.
\item \textbf{Step 1}: Each user $k$ sends pilot sequence $\Bphik$.
\item \textbf{Step 2}: The eavesdropping attack detection scheme in Section~\ref{sec:Eav_detection} is performed by the APs for all the users. 
\item \textbf{Step 3}: If there is no eavesdropper attacking Stop. Otherwise, go to step 4.
\item \textbf{Step 4}: If we conclude that the system is overheard by an eavesdropper, and one user (say user $1$) is specifically targeted, then the system will find $\Bphione^*$ such that the eavesdropping attack detection scheme results that there is no eavesdropper present in the system attacking user $1$. Then, user $1$ will re-transmit this new pilot  $\Bphione^*$ and other users will re-transmit the new pilots from the remaining  orthogonal pilot set. Set $N_{\mathrm{pr}}=2$. 
\end{itemize}
Accordingly, the achievable SE expression for user $1$  can be obtained as
\begin{align} \label{eq:pr_SE_k1} 
R_1^{\mathrm{pr}}= \log_2\left(1+\SINRone^{\mathrm{pr}} \right),
\end{align}
where
\begin{align}~\label{eq:SINR_k_pr} 
\SINRone^{\mathrm{pr}}= \frac{\left(\!\!\!\!\quad\sum\limits_{l=1}^{L} \sqrt{\left(M-\tausl\right) \rho_{l,1} \gamma_{l, 1}^{\mathrm{pr}}}\right)^{2}}{\sum\limits_{t=1}^{K} \sum\limits_{l=1}^{L} \rholt\left(\beta_{l, 1}-\delta _{l,1}\gamma_{l, 1}^{\mathrm{pr}}\right)+1},
\end{align}
with 
\begin{align}~\label{eq:gamalmk_pr}  
		\gamma_{l,1}^{\mathrm{pr}}= 
  \begin{cases}
  \frac{\tau_{p} \rho_{u} \betalone^{2}}{\tau_{p} \rho_{u} \betalone+1}, & N_{\mathrm{pr}}= 2, 
  \\ 
  \frac{\tau_{p} \rho_{u} \betalone^{2}}{\tau_{p} \rho_{u} \betalone+1}, 
  & N_{\mathrm{pr}}=1~\text{and}~\mathcal{H}_{1,0},
    \\ 
  \frac{\tau_{p} \rho_{u} \betalone^{2}}{\tau_{p} \rho_{u} \betalone+\tau_{p} \rho_{E} \betalE+1}, 
  & N_{\mathrm{pr}}=1~\text{and}~\mathcal{H}_{1,1},
		\end{cases} 
\end{align}
and the achievable SE of the eavesdropper can be obtained as $R_E^{\mathrm{pr}}= \log_2 (1+\SINRE^{\mathrm{pr}})$ where
\begin{align} 
\SINRE^{\mathrm{pr}}=\frac { \sum\limits_{l=1}^{L}\rho_{l,1}\beta_{l,E}}{{\sum\limits_{\substack{ t \neq 1}}^{K}} \sum\limits_{l=1}^{L}\rholt\beta_{l,E}+1}. 
\end{align}
We highlight that the proposed pilot re-transmission scheme might consume more uplink training resources. Therefore, the cost of the pilot re-transmission scheme entails a loss in the SE of the users which highly depends on the application scenarios. For example, this cost would be non-negligible when the length of the coherence interval is small (e.g., in high mobility environments or/and the number of users is large) or/and the length of the uplink training phase is large. Therefore, to ensure a fair comparison between conventional CF-mMIMO (with no pilot re-transmission) and secure CF-mMIMO with pilot re-transmission scheme, we consider the per-user (eavesdropper)  throughputs which take into account the pilot training overhead and are defined as
\begin{align}
S_a=B\left(1-\frac{\tau_p}{\tau}\right) {R_a},
\end{align}
and
\begin{align}
S_a^{\mathrm{pr}}=B\left(1-\frac{N_{\mathrm{pr}}\tau_p}{\tau}\right)R_a^{\mathrm{pr}},
\end{align}
where $a \in \{1,E\}$, $\tau$ is the total length of the coherence interval in samples, and $B$ is the spectral bandwidth. The term $\tau_p/\tau$  signifies the fact that, for each coherence interval of length $\tau$ samples and with no pilot re-transmission, we consume $\tau_p$ samples for the uplink  training phase, while for the case of pilot re-transmission, we consume $N_{\mathrm{pr}}\tau_p$ samples.

Figure \ref{fig8} compares the performance of the CF-mMIMO with pilot re-transmission and with no pilot re-transmission schemes as a function of   $\tau_p / \tau$ for PPZF and MRT precoding designs. The results indicate that for PPZF precoding, pilot re-transmission outperforms no pilot re-transmission when
$\tau_p / \tau  < 0.23$, and exhibits an inferior performance when $\tau_p / \tau  \geq 0.23$. This result is in accordance with
the above discussion and implies that the lengths of  $\tau$ and $\tau_p$  are the key factors determining the
extent to which CF-mMIMO with pilot re-transmission outperforms conventional CF-mMIMO (with no pilot re-transmission).

\section {Conclusion}
We investigated the SSE of  CF-mMIMO  systems with multi-antenna APs and PPZF precoding under an active eavesdropping attack and the presence of channel estimation errors.  We derived closed-form expressions for the  SE at the legitimate user and eavesdropper, and thereby the SSE. 
We proposed a large-scale-based AP selection approach to improve the SSE and  a large-scale-based power optimization approach  to maximize the received SINR at the
legitimate user, subject to a maximum allowable SINR at the eavesdropper
and  individual QoS  requirements and transmit power constraints. We showed that our proposed AP selection and power optimization approaches provide significant SE gains. Our results also confirm that for a CF-mMIMO system experiencing active eavesdropping, the PPZF precoding scheme can achieve a noticeable SSE gain compared to the MRT precoding scheme.  Insights from the simulation results show that the number of APs is a dominating factor of the SSE performance. When the number of APs is large, a secure  CF-mMIMO system relying on the proposed approaches and PPZF scheme can offer excellent performance even for higher user loads. Note that the proposed AP selection and power optimization algorithms are not limited to Rayleigh fading channels. They are applicable to various channel models by using numerical methods, as evidenced in Fig.~\ref{figquad}. Finally, investigating CF-mMIMO systems under multiple-eavesdropper attacks and developing joint AP selection and power optimization schemes to maximize the SSE are interesting research topics for future research.
\appendices
\section{Proof of Proposition~\ref{prop:SINRk}}~\label{app:SINRK} 
We first calculate the numerator of \eqref{eq:SINRk} as
\begin{align}~\label{eq:ExBUK} &\Bigg|\sum \limits _{l \in \mathcal {Z} _{k}} \sqrt {\rho _{l,k}} \Ex\left \{{{(\hat { \mathbf {h}}_{l,k}+\tilde { \mathbf {h}}_{l,k}) ^{\text {H}} \mathbf {w} ^ {\textrm {PZF}} _{l, {k}}}}\right \}\notag\\&+\sum \limits _{p \in \mathcal {M} _{k}} \sqrt {\rho _{p,k}} \Ex\left \{{{ (\hat { \mathbf {h}}_{p,k}+\tilde { \mathbf {h}}_{p,k}) ^{\text {H}} \mathbf {w} ^ {\textrm {PMRT}} _{p,j}}}\right \}\Bigg |^{2}\notag\\
&=\left ({\sum \limits _{l \in \mathcal {Z} _{k}} \textbf\!\!\sqrt {(M\!\!-\!\! \tausl)\rho _{l,k}\gamma _{l,k}}\!  + \!\!\!\sum \limits _{p \in \mathcal {M} _{k}} \!\!\sqrt {(M\!\!- \!\!\tausp)\rho _{p,k}\gamma _{p,k}} }\right)^{2},\notag\\
&=\left ({\sum \limits _{l=1}^{L} \sqrt {(M\!-\!\tausl)\rho _{l,k}\gamma _{l,k}}}\right)^{2}.\end{align}
The first term in the denominator of \eqref{eq:SINRk} is given by
\begin{align}~\label{eq:ExBUK2}
&\sum_{t=1}^{K}\! \Ex \!\left\{\left|\sum_{l \in \mathcal{Z}_{t}} \!\!\sqrt{\rho_{l, t}} \mathbf{h}_{l, k}^{\text{H}} \mathbf{w}_{l, {t}}^{\textrm{PZF}}\!+\!\sum_{p \in \mathcal{M}_{t}} \!\!\sqrt{\rho_{p, l}} \mathbf{h}_{p, k}^{\text{H}} \mathbf{w}_{p, {t}}^{\textrm{PMRT}}\right|^{2}\right\}\!\notag\\&=\!\sum_{t=1}^{K}\! \Ex \!\left\{\left|\sum_{l \in \mathcal{Z}_{t}} \!\!\sqrt{\rho_{l, t}} \mathbf{h}_{l, k}^{\text{H}} \mathbf{w}_{l, {t}}^{\textrm{PZF}}\right|^{2}\right\}\!\nonumber\\
&+\!2 \!\sum_{t=1}^{K}\! \operatorname{Re}\!\left\{\!\sum_{l \in \mathcal{Z}_{t}} \sum_{p \in \mathcal{M}_{t}}\!\! \!\sqrt{\rho_{l, t} \rho_{p, t}} \Ex \!\left\{\mathbf{h}_{l, k}^{\text{H}} \mathbf{w}_{l, {t}}^{\textrm{PZF}}\left(\mathbf{w}_{p, {t}}^{\textrm{PMRT}}\right)^{\text{H}}\! \mathbf{h}_{p, k}\right\}\!\!\right\}\notag\\&+\!\!\sum_{t=1}^{K}\! \Ex \left\{\left|\sum_{p \in \mathcal{M}_{t}} \!\!\sqrt{\rho_{p, t}} \mathbf{h}_{p, k}^{\text{H}} \mathbf{w}_{p, t}^{\textrm{PMRT}}\right|^{2}\right\}.
\end{align}
Now, we compute  the first term of \eqref{eq:ExBUK2} as follows:

If $k \in \mathcal{S}_{l}$ , then the term $\mathbf{h}_{l, k}^{\text{H}} \mathbf{w}_{l, t}^{\textrm{PZF}}$ can be calculated based on \eqref{eq:alphaPZF}. Thus, we have
\begin{align}\label{eq:ExBUK3}&\sum \limits _{t=1}^{K} \mathop {\mathrm {\Ex }}\nolimits \left \{{\left |{ \sum \limits _{l \in \mathcal {Z} _{t}}\sqrt {\rho_{l,t}} \mathbf {h}_{l,k} ^{\text{H}} \mathbf {w}^{\textrm{PZF}} _{l,{t}}}\right |^{2}}\right \}\notag\\&=\sum \limits _{t=1}^{K} \mathop {\mathrm {\Ex }}\nolimits \left \{{\left |{ \sum \limits _{l \in \mathcal {Z} _{t}}\sqrt {\rho _{l,t}}(\alpha ^ {\textrm {PZF}} _{l,k,t} + \tilde { \mathbf {h}}_{l,k} ^{\text {H}} \mathbf {w} ^ {\textrm{PZF}} _{l,{t}})}\right |^{2}}\right \},\notag\\&=\left ({ \sum \limits _{l \in \mathcal {Z} _{k}}\sqrt {\rho _{l,k}}\alpha ^ {\textrm{PZF}} _{l,k,k}}\right )^{2} \!\!+ \sum \limits _{t=1}^{K} { \sum \limits _{l \in \mathcal {Z} _{t}}}\rho _{l,t}(\beta _{l,k}\!-\!\gamma _{l,k}),\notag
\\
&=\left ({ \sum \limits _{l \in \mathcal {Z} _{k}}\!\sqrt {\rho _{l,k}(M\!-\! \tausl)\gamma _{l,k}}}\right)^{2} \! +\! \sum \limits_{t=1}^{K}{ \sum \limits _{l \in \mathcal {Z} _{t}}}\rho _{l,t}(\beta _{l,k}\!-\!\gamma _{l,k}).
\end{align}

If $k \notin \mathcal{S}_{l}$, then $k \in \mathcal{W}_{l}$ and $\mathbf{h}_{l, k}$ is independent of $\mathbf{w}_{l, t}^{\textrm{PZF}}$ $\forall t \neq{k}$. Hence,
\begin{align}\label{eq:ExBUK4}&\sum \limits _{t=1}^{K} \mathop {\mathrm {\Ex }}\nolimits \left \{{\left |{ \sum \limits _{l \in \mathcal {Z} _{t}}\sqrt {\rho_{l,t}} \mathbf {h}_{l,k} ^{\text {H}} \mathbf {w}^{\textrm{PZF}} _{l,{t}}}\right |^{2}}\right \}\notag\\
&=\sum \limits _{t=1}^{K}\sum \limits _{l \in \mathcal{Z} _{t}}\rho _{l,t} \mathop {\mathrm {\Ex }}\nolimits \left \{{\left |{ \mathbf{h}_{l,k} ^{\text{H}} \mathbf{w} ^ {\textrm{PZF}} _{l,{t}}}\right |^{2}}\right \}, \notag\\
&=\sum \limits _{t=1}^{K}\sum_{l \in \mathcal{Z}_{t}} \rho_{l,t} \Ex \left\{\mathbf{h}_{l, k}^{\text{H}} \Ex\left\{\mathbf{w}_{l, t}^{\mathrm{PZF}}\left(\mathbf{w}_{l, t}^{\mathrm{PZF}}\right)^{\text{H}}\right\} \mathbf{h}_{l, k}\right\},\notag\\
&=\sum \limits _{t=1}^{K}\sum_{l \in \mathcal{Z}_{t}} \rho_{l, t}\beta_{l,k}.
\end{align}
Now, we combine \eqref{eq:ExBUK3} and \eqref{eq:ExBUK4} to obtain
\begin{align}\label{eq:ExBUK5} &\sum \limits _{t=1}^{K} \mathop {\mathrm {\Ex }}\nolimits \left \{{\left |{ \sum \limits _{l \in \mathcal {Z} _{t}}\sqrt {\rho_{l,t}} \mathbf{h}_{l,k} ^{\text{H}} \mathbf {w}^{\textrm{PZF}} _{l,{t}}}\right |^{2}}\right \}\notag\\&=\left ({ \sum \limits _{l \in \mathcal {Z} _{k}}\!\sqrt {\rho _{l,k}(M\!-\! \tausl)\gamma _{l,k}}}\right)^{2} \!\!\! +\! \sum \limits_{t=1}^{K}{ \sum \limits _{l \in \mathcal {Z} _{t}}}\rho _{l,t}(\beta _{l,k}\!-\!\delta_{l,k}\gamma _{l,k}).\end{align}
Similarly, the third term of the right-hand-side in \eqref{eq:ExBUK2} can be computed as follows:

If $k \in \mathcal{W}_{p}$, then $\mathbf{h}_{p, k}$ is independent of $\mathbf{w}_{p, t}^{\textrm{PMRT}}$ $\forall t \neq{k}$. Thus, we obtain
\begin{align}
\label{eq:ExBUK6}
&\sum \limits _{t=1}^{K} \mathop {\mathrm {\Ex }}\nolimits \left \{{\left |{ \sum \limits _{p \in \mathcal {M} _{t}}\sqrt {\rho_{p,t}} \mathbf {h}_{p,k} ^{\text {H}} \mathbf {w}^{\textrm{PMRT}} _{p,{t}}}\right |^{2}}\right \}\notag\\&=\sum \limits _{t=1}^{K}\sum \limits _{p \in \mathcal {M} _{t}}\rho _{p,t} \mathop {\mathrm {\Ex }}\nolimits \left \{{\left |{ \mathbf {h}_{p,k} ^{\text{H}} \mathbf {w} ^ {\textrm{PMRT}} _{p,t}}\right |^{2}}\right \} \notag\\&=\left ({ \sum \limits _{l \in \mathcal {Z} _{k}}\!\sqrt {\rho _{l,k}(M\!-\! \tausl)\gamma _{l,k}}}\right)^{2}\notag\\&+\sum \limits _{t=1}^{K}\sum_{p \in \mathcal{M}_{t}} \rho_{p,t} \Ex\left\{\mathbf{h}_{p, k}^{\mathrm{H}} \Ex\left\{\mathbf{w}_{p, t}^{\mathrm{PMRT}}\left(\mathbf{w}_{p, t}^{\mathrm{PMRT}}\right)^{\text{H}}\right\} \mathbf{h}_{p, k}\right\}\notag\\&=\left ({ \sum \limits _{p \in \mathcal {M} _{k}}\!\sqrt {\rho _{l,k}(M\!-\! \tausp)\gamma _{l,k}}}\right)^{2}+\sum \limits _{t=1}^{K}\sum_{p \in \mathcal{M}_{t}} \rho_{p, t}\beta_{p,k}.
\end{align}

If $k \in \mathcal{S}_{p}$, then $\hat { \mathbf {h}}_{p,k}\mathbf{w}_{p, t}^{\textrm{PMRT}}=0$, and we have
\begin{align}\label{eq:ExBUK7}
&\sum \limits _{t=1}^{K} \mathop {\mathrm {\Ex }}\nolimits \left \{{\left |{ \sum \limits _{p \in \mathcal {M} _{t}}\sqrt {\rho_{p,t}} \mathbf {h}_{p,k} ^{\text {H}} \mathbf {w}^{\textrm{PMRT}} _{p,t}}\right |^{2}}\right \}\notag\\&=\sum \limits _{t=1}^{K} \mathop {\mathrm {\Ex }}\nolimits \left \{{\left |{ \sum \limits _{p \in \mathcal {M} _{t}}\sqrt {\rho _{p,t}}\tilde { \mathbf {h}}_{p,k} ^{\text {H}} \mathbf {w} ^ {\textrm{PMRT}} _{p,t}}\right |^{2}}\right \},\notag\\&=\sum \limits _{t=1}^{K} { \sum \limits _{p \in \mathcal {M} _{t}}}\rho _{p,t}(\beta _{p,k}\!-\!\gamma _{p,k}).\end{align}
Adding \eqref{eq:ExBUK6} to \eqref{eq:ExBUK7}, we obtain
\begin{align}\label{eq:ExBUK8}&\sum \limits _{t=1}^{K} \mathop {\mathrm {\Ex }}\nolimits \left \{{\left |{ \sum \limits _{ \in \mathcal {M} _{t}}\sqrt {\rho_{p,t}} \mathbf {h}_{p,k} ^{\text {H}} \mathbf {w}^{\textrm{PMRT}} _{p,{t}}}\right |^{2}}\right \}\notag\\&=\!\!\left ({ \sum \limits _{p \in \mathcal {M} _{k}}\!\!\!\sqrt {\rho _{p,k}(M\!-\! \tausp)\gamma _{p,k}}}\right)^{2}\!\!\!\!\!+\!\!\!\sum \limits _{t=1}^{K} \!{ \sum \limits_{p \in \mathcal {M} _{t}}}\!\!\rho _{p,t}(\beta _{p,k}\!\!-\!\delta_{p,k}\gamma _{p,k}).
\end{align}
The second term of \eqref{eq:ExBUK2} is equal to 0 $\forall t \neq{k}$. Moreover, for $t=k$, we obtain
\begin{align}\label{eq:ExBUK9}
\hspace {-1.8pc} &2 \sum_{t=1}^{K}\! \operatorname{Re}\left\{\sum_{l \in \mathcal{Z}_{k}} \!\sum_{p \in \mathcal{M}_{k}}\!\! \sqrt{\rho_{l, k} \rho_{p, k}} \Ex \left\{\mathbf{h}_{l, k}^{\mathrm{H}} \mathbf{w}_{l,k}^{\textrm{PZF}}\left(\mathbf{w}_{p,k}^{\textrm{PMRT}}\right)^{\mathrm{H}}\! \mathbf{h}_{p, k}\right\}\right\}\notag\\&=\!2\!\sum_{l \in \mathcal{Z}_{k}}\!\sum_{p \in \mathcal{M}_{k}} \!\!\sqrt{\rho_{l,k} \rho_{p,k}} \Ex \left\{\mathbf{h}_{l, k}^{\mathrm{H}} \mathbf{w}_{l,k}^{\textrm{PZF}}\right\}\Ex \left\{\mathbf{h}_{p, k}^{\mathrm{H}}\mathbf{w}_{p,k}^{\textrm{PMRT}}\right\},\notag \\&=\!2\!\sum_{l \in \mathcal{Z}_{k}}\!\sum_{p \in \mathcal{M}_{k}} \!\!\sqrt{\rho_{l,k}\rho_{p,k}(M \!\!- \!\!\tausl)(M\!\!-\!\!\tausp)\gamma_{l,k}\gamma_{p,k}}.
\end{align}
Now, using  \eqref{eq:ExBUK5}, \eqref{eq:ExBUK8}, and \eqref{eq:ExBUK9}, we can rewrite \eqref{eq:ExBUK2} as
\begin{align}\label{eq:ExBUK10}
&\sum_{t=1}^{K}\! \Ex \!\left\{\left|\sum_{l \in \mathcal{Z}_{t}} \!\!\sqrt{\rho_{l, t}} \mathbf{h}_{l, k}^{\mathrm{H}} \mathbf{w}_{l, {t}}^{\textrm{PZF}}\!+\!\sum_{p \in \mathcal{M}_{t}} \!\!\sqrt{\rho_{p, l}} \mathbf{h}_{p, k}^{\mathrm{H}} \mathbf{w}_{p, {t}}^{\textrm{PMRT}}\right|^{2}\right\}\!\notag \\&=\left ({ \sum \limits _{l \in \mathcal {Z} _{k}}\!\sqrt {\rho _{l,k}(M\!-\! \tausl)\gamma _{l,k}}}\right)^{2} \! +\! \sum \limits_{t=1}^{K}{ \sum \limits _{l \in \mathcal {Z} _{t}}}\rho _{l,t}(\beta _{l,k}\!-\!\delta_{l,k}\gamma _{l,k})\notag\\&+\!\!\left ({ \sum \limits _{p \in \mathcal {M} _{k}}\!\!\!\sqrt {\rho _{p,k}(M\!-\! \tausp)\gamma _{p,k}}}\right)^{2}\!\!\!\!\!+\!\!\sum \limits _{t=1}^{K} \!{ \sum \limits _{p \in \mathcal {M} _{t}}}\!\!\rho _{p,t}(\beta _{p,k}\!-\!\delta_{p,k}\gamma _{p,k})\notag\\&+\!2\!\sum_{l \in \mathcal{Z}_{k}}\!\sum_{p \in \mathcal{M}_{k}} \!\!\sqrt{\rho_{l,k}\rho_{p,k}(M \!\!- \!\!\tausl)(M\!\!-\!\!\tausp)\gamma_{l,k}\gamma_{p,k}},\notag\\&=\left ({ \sum \limits _{l \in \mathcal {Z} _{k}}\!\sqrt {\rho _{l,k}(M\!-\! \tausl)\gamma _{l,k}}}+ { \sum \limits _{p \in \mathcal {M} _{k}}\!\sqrt {\rho _{p,k}(M\!-\! \tausp)\gamma _{p,k}}}\right)^{2} \!\!\! \notag\\&+\sum \limits_{t=1}^{K}\sum \limits_{l=1}^{L}\!\!\rho _{l,t}(\beta _{l,k}\!-\!\delta_{l,k}\gamma _{l,k}),\notag\\&=\left ({ \sum \limits_{l=1}^{L}\!\sqrt {\rho _{l,k}(M\!-\! \tausl)\gamma _{l,k}}}\right)^{2} \! +\! \sum \limits_{t=1}^{K}\sum \limits_{l=1}^{L}\rho _{l,t}(\beta _{l,k}\!-\!\delta_{l,k}\gamma _{l,k}).
\end{align}
Substituting \eqref{eq:ExBUK} and \eqref{eq:ExBUK10} in \eqref{eq:SINRk}, $\textrm {SINR}_{k}^{\textrm{PPZF}}$ in \eqref{eq:SINR_k2} is obtained.
We next provide a closed-form expression for the SE for the eavesdropper.

\section{Proof of Proposition~\ref{prop:SINRE}}~\label{app:SINRE} 
Based on~\eqref{I_E},  the received SINR at the eavesdropper can be written as
  \begin{align}~\label{eq:SINRE}
\SINRE= \frac {\Ex\left\{ |\BUE|^{2} \right\}}{{\sum\limits_{\substack{ t \neq 1}}^{K} }{{ \Ex\left\{ |\mathrm {UI}_{ \textrm {E}, t}|^{2}\right\}}} + 1 }.
\end{align} 
We first calculate $\Ex \left \{{{|\BUE|^{2}}}\right \}$. By denoting $\hat { \qh}_{p,E}=\sqrt{\alpha_{p, 1}} \qh_{p,1}$, where $\alpha_{p,1}=\big(\rho_{E} \beta_{p,E}^{2}\big) /\big(\rho_{u} \beta_{p,1}^{2}\big)$, and $\gamma_{p,E}=\alpha_{p,1} \gamma_{p,1}$, we have
\begin{align}~\label{eq:ExBUE}
	&\Ex \left \{{{|\BUE|^{2}}}\right \}\notag\\
		&=\Ex\Bigg\{\bigg|\sum_{l \in \Zone} \sqrt{\rho_{l, 1}} \qh_{l, E}^{\mathrm{H}} \qw_{l, 1}^{\PZF}+\sum_{p \in \Mone} \sqrt{\rho_{p, 1}} \qh_{p, E}^{\mathrm{H}} \qw_{p, 1}^{\PMRT}\bigg|^{2}\Bigg\}
  \notag\\
		&=\Ex\Bigg\{\!\bigg|\sum_{l \in \Zone}
  \!\!\sqrt{\rho_{l, 1}} \qh_{l, E}^{\mathrm{H}} \qw_{l, 1}^{\PZF}\bigg|^{2}\Bigg\}\!\!+\!
  \Ex\Bigg\{\!\bigg|\!\sum_{p \in \Mone}\!\!\!\! \sqrt{\rho_{p, 1}} \qh_{p, E}^{\mathrm{H}} \qw_{p, 1}^{\PMRT}\bigg|^{2}\!\Bigg\}
  \notag\\
		&+ 2\! \operatorname{Re}\Bigg\{\!\sum_{l \in \Zone} \sum_{p \in \Mone}\!\! \sqrt{\rho_{l, 1} \rho_{p, 1}} \Ex\left\{\qh_{l, E}^{\mathrm{H}} \qw_{l, 1}^{\PZF}\!\left(\qw_{p, 1}^{\PMRT}\!\right)^{\mathrm{H}}\! \qh_{p, E}\!\right\}\!\!\Bigg\},
  \notag\\
		&=\left(\sum_{l \in \Zone} \sqrt {{\rho _{l,1}}(M-\tausl)\gamma_{l,E}}\right)^{2}+\sum_{l \in \Zone} \rho_{l, 1}(\beta_{l,E}-\gamma_{l,E})\notag\\&+\sum_{p \in \Mone} \rho_{p, 1} \Ex\left\{\left|\hat { \qh}_{p,E} ^{\text {H}}\qw_{p, 1}^{\PMRT}\right|^{2}\right\}
  \notag
  \\
  &+\sum_{p \in \Mone}\!\sum_{q \in \Mone, q \neq p}\!\!\!\!\!\!\!\!
  \sqrt{\rho_{p, 1}\rho_{q, 1}}\Ex\left\{\hat { \qh}_{p,E} ^{\text {H}}\qw_{p, 1}^{\PMRT}\right\}\Ex\left\{(\qw_{q, 1}^{\PMRT})^{\text{H}}\hat { \qh}_{q,E}\right\}
  \notag\\
  &+\sum_{p \in \Mone} \rho_{p, 1}(\beta_{p,E}-\gamapE)\notag\\&+2\sum_{l \in \Zone}\sum_{p \in \Mone}\sqrt{\rho_{l, 1}\rho_{p, 1}(M -\tausl)(M\!\!-\!\!\tausp)\gamma_{l,E}\gamapE} ,
\end{align}
where we have used $\gamalmE=\alpha_{l,1} \gamma_{l,1}$, and the fact that $\tilde { \qh}_{p,E}= { \qh}_{p,E}-\hat { \qh}_{p,E}$ is independent of $\hat { \qh}_{p,E}$ and has zero mean. 
Since $\hat { \mathbf {h}}_{p,E} ^{\text {H}}=\sqrt{\alpha_{p,1}}\hat { \mathbf {h}}_{p,1} ^{\text {H}}$, the third term of \eqref{eq:ExBUE} can be re-written as
\begin{align}~\label{eq:ExBUE2}
&\sum_{p \in \mathcal{M}_{1}} \rho_{p, 1} \Ex \left\{\left|\hat { \mathbf {h}}_{p,E} ^{\text {H}}\mathbf{w}_{p, 1}^{\textrm{PMRT}}\right|^{2}\right\}\notag\\&=\sum_{p \in \mathcal{M}_{1}}  \frac{\rho_{p, 1}\alpha_{p,1}}{(M-\tausp)\gamma_{p,1}} \Ex \left\{\left|\hat {\mathbf {h}}_{p,1} ^{\text {H}} \mathbf{B}_{p}\hat { \mathbf {h}}_{p,1}\right|^{2}\right\},\notag\\&=\sum_{p \in \mathcal{M}_{1}}  \rho_{p, 1}\gamma_{p,E}(M-\tausp+1).
\end{align}
Moreover, by using \eqref{eq:wPMRT} and the fact that $\Ex\left\{\hat {\qh}_{p,1} ^{\text {H}}\qB_{p}\hat { \qh}_{p,1}\right\}=\Ex\left\{\trace(\mathbf{B}_{p}\hat { \qh}_{p,1}\hat {\qh}_{p,1} ^{\text {H}})\right\}$, we have
\begin{align}~\label{eq:ExBUE3}
&\hspace {0pc} \sum_{p \in \mathcal{M}_{1}}\sqrt{\rho_{p, 1}}\Ex \left\{\hat { \mathbf {h}}_{p,E}^{\text {H}}\mathbf{w}_{p, 1}^{\textrm{PMRT}}\right\}\notag\\&=\sum_{p \in \mathcal{M}_{1}} \frac{\sqrt{\rho_{p, 1}\alpha_{p,1}}}{\sqrt{(M-\tausp)\gamma_{p,1}}} \Ex \left\{\hat {\mathbf {h}}_{p,1} ^{\text {H}} \mathbf{B}_{p}\hat { \mathbf {h}}_{p,1}\right\},\notag\\&=\sum_{p \in \mathcal{M}_{1}} \frac{\sqrt{\rho_{p, 1}\alpha_{p,1}}}{\sqrt{(M-\tausp)\gamma_{p,1}}} \Ex \left\{\trace(\mathbf{B}_{p}\hat { \mathbf {h}}_{p,1}\hat {\mathbf {h}}_{p,1} ^{\text {H}})\right\},\notag\\&=\sum_{p \in \mathcal{M}_{1}} \frac{\sqrt{\rho_{p, 1}\alpha_{p,1}}}{\sqrt{(M-\tausp)\gamma_{p,1}}} \gamma_{p,1}\Ex \left\{\trace(\mathbf{B}_{p})\right\},\notag\\&=\sum_{p \in \mathcal{M}_{1}} \sqrt{\rho_{p, 1}(M-\tausp)\gamma_{p,E}}.
\end{align}
Thus, we can rewrite the fourth term of \eqref{eq:ExBUE} as
\begin{align}~\label{eq:ExBUE4}
&\hspace {0pc}\sum_{p \in \mathcal{M}_{1}}\!\sum_{q \in \mathcal{M}_{1}}\sqrt{\rho_{p, 1}\rho_{q, 1}}\Ex \left\{\hat { \mathbf {h}}_{p,E} ^{\text {H}}\mathbf{w}_{p, 1}^{\textrm{PMRT}}\right\}\!\Ex \left\{(\mathbf{w}_{q, 1}^{\textrm{PMRT}})^{\text{H}}\hat { \mathbf {h}}_{q,E}\right\}\notag\\&=\sum_{p \in \mathcal{M}_{1}}\sum_{q \in \mathcal{M}_{1}} \sqrt{\rho_{p, 1}\rho_{q, 1}(M - \tausp)(M -\tausq)\gamma_{p,E}\gamma_{q,E}}.
\end{align}
To this end, by substituting \eqref{eq:ExBUE2} and \eqref{eq:ExBUE4} into \eqref{eq:ExBUE}, we obtain
\begin{align}~\label{eq:ExBUE5}
&\Ex \left \{{{|\BUE|^{2}}}\right \}=\Bigg(\sum_{l=1}^{L} \sqrt {{\rho _{l,1}}(M\!-\!\tausl)\gamma_{l,E}}\Bigg)^{2}\nonumber\\
&+\sum_{l \in \Zone} \rho_{l, 1}(\beta_{l,E}\!-\!\gamma_{l,E})
\!+\!\sum_{p \in \Mone}\! \rho_{p, 1}\beta_{p,E},\notag
\\&=\left(\sum_{l=1}^{L} \sqrt {{\rho _{l,1}}(M-\tausl)\gamma_{l,E}}\right)^{2}\!\!\!+\!\!\sum_{l=1}^{L}\rho_{l, 1}\beta_{l,E}\!-\!\!\sum_{l \in \Zone}\!\! \rho_{l, 1}\gamma_{l,E}.
  \end{align}
Similarly, for $t\neq{1}$, we calculate $\Ex \left \{|\UIEt|^{2}\right \}$ as
\begin{align}~\label{eq:ExUIE2}
&\hspace{-1em}\Ex \big \{|\UIEt|^{2}\big\}\!
  =\!\Ex\Bigg\{\!\Big|\!\sum_{l \in \Zt} \!\!\!\sqrt{\rholt} \qh_{l, E}^{\mathrm{H}} \qw_{l, t}^{\PZF}\!\!+\!\!\!\!\sum_{p \in \Mt} \!\!\!\!\!\sqrt{\rho_{p,t}} \qh_{p, E}^{\mathrm{H}} \qw_{p,t}^{\PMRT}\Big|^{2}\Bigg\},
\notag\\
&\hspace{-1.5em}=
\Ex\bigg\{\bigg|\sum_{l \in \Zt} \sqrt{\rho_{l,t}} \qh_{l, E}^{\mathrm{H}} \qw_{l,t}^{\PZF}\bigg|^{2}\bigg\}\!+\!\Ex\bigg\{\!\bigg|\sum_{p \in \Mt}\!\! \sqrt{\rho_{p,t}} \qh_{p, E}^{\mathrm{H}} \qw_{p,t}^{\PMRT}\bigg|^{2}\!\bigg\}
\notag\\
+& 
2 \operatorname{Re}\!\Bigg\{\!\!\sum_{l \in \Zt}\! \sum_{p \in \Mt} \!\!\!\!\sqrt{\rho_{l,t} \rho_{p,t}} \Ex\left\{\!\qh_{l, E}^{\!\mathrm{H}} \qw_{l,t}^{\PZF}
\big(\qw_{p,t}^{\PMRT}\big)^{\!\mathrm{H}}
\!\qh_{p, E}\!\right\}\!\!\Bigg\}\!.
\end{align}

The first term of \eqref{eq:ExUIE2} is computed for two conditions: 1) when user $1$ $\in \Sl$, then $\hat{\qh}_{l, 1}^{\mathrm{H}} \qw_{l, t}^{\PZF}=0$ and  $\hat{\qh}_{l, E}^{\mathrm{H}} \qw_{l, t}^{\PZF}=0$ ($\hat{\qh}_{l, E}=\sqrt{\alpha_{l,1}}\hat{\qh}_{l, 1}$); 2) when user $1$ $\notin \Sl$ and hence $\hat{\qh}_{l, E}$ is independent of $\qw_{l, t}^{\PZF}$. Then, we obtain
\begin{align}~\label{eq:ExBUE3}
		\Ex\bigg\{\bigg|\sum_{l \in \Zt} \!\!\sqrt{\rho_{l,t}} \qh_{l, E}^{\mathrm{H}} \qw_{l,t}^{\PZF}\bigg|^{2}\bigg\}\!=\!\sum_{l \in \Zt}\! \rholt(\beta_{l,E}-\delta_{l,1}\gamma_{l,E}).
\end{align}
Similarly, for the the second term of \eqref{eq:ExUIE2}, $\hat{\qh}_{p, 1}^{\mathrm{H}}\qB_{p}=0$ and $\hat{\qh}_{p, E}^{\mathrm{H}}\mathbf{B}_{p}=0$ when user $1$ $\in\mathcal{S}_{p}$. Moreover, $\qh_{p, 1}$ is independent of $\qw_{p, t}^{\PMRT}$ when user $1$ $\notin\mathcal{S}_{p}$. Then, we obtain
\vspace{-0.2em}
\begin{align}~\label{eq:ExPMRT}
		&\hspace {0pc} 
		\Ex\bigg\{\!\bigg|\!\sum_{p \in \Mt} \!\!\!\!\sqrt{\rho_{p,t}}\qh_{p, E}^{\mathrm{H}} \qw_{p,t}^{\PMRT}\bigg|^{2}\bigg\}\!\!
		=\!\!\!\!\sum_{p \in \Mt} \!\!\!\rho_{p,t}(\beta_{p,E}\!-\!\delta_{p,1}\gamapE).
\end{align}
The third term of \eqref{eq:ExUIE2} is equal to 0 as $\qh_{l, E}$ and $\qh_{p, E}$ are independent of both $\qw_{l, t}^{\PZF}$ and $\qw_{p, t}^{\PMRT}$, respectively. By substituting~\eqref{eq:ExBUE3} and~\eqref{eq:ExPMRT} into~\eqref{eq:ExUIE2}, we obtain
\begin{align}~\label{eq:ExUIE3}
		\Ex \left \{|\mathrm \UIEt|^{2}\right \}&=\sum_{l \in \Zt} \rholt(\beta_{l,E}-\delta_{l,1}\gamma_{l,E})\notag\\
		&+\sum_{p \in \Mt} \rho_{p,t}(\beta_{p,E}-\delta_{p,1}\gamapE),\notag\\
		&=\sum_{l=1}^{L} \rho_{p,t}(\betalE-\delta_{l,1}\gamalmE).
\end{align}
By plugging \eqref{eq:ExBUE2} and \eqref{eq:ExUIE3} into \eqref{eq:SINRE}, $\SINRE$ in \eqref{eq:SINR_E2} can be obtained.

\vspace{0em}
\balance
\bibliographystyle{IEEEtran}
\bibliography{bibliography.bib}

\begin{thebibliography}{10}
\providecommand{\url}[1]{#1}
\csname url@samestyle\endcsname
\providecommand{\newblock}{\relax}
\providecommand{\bibinfo}[2]{#2}
\providecommand{\BIBentrySTDinterwordspacing}{\spaceskip=0pt\relax}
\providecommand{\BIBentryALTinterwordstretchfactor}{4}
\providecommand{\BIBentryALTinterwordspacing}{\spaceskip=\fontdimen2\font plus
\BIBentryALTinterwordstretchfactor\fontdimen3\font minus \fontdimen4\font\relax}
\providecommand{\BIBforeignlanguage}[2]{{%
\expandafter\ifx\csname l@#1\endcsname\relax
\typeout{** WARNING: IEEEtran.bst: No hyphenation pattern has been}%
\typeout{** loaded for the language `#1'. Using the pattern for}%
\typeout{** the default language instead.}%
\else
\language=\csname l@#1\endcsname
\fi
#2}}
\providecommand{\BIBdecl}{\relax}
\BIBdecl

\bibitem{YASSEEN:2023:VTC}
Y.~S. Atiya, Z.~Mobini, H.~Q. Ngo, and M.~Matthaiou, ``Cell-free massive {MIMO} with protective partial zero-forcing and active eavesdropping,'' in \emph{Proc. IEEE VTC}, Jun. 2023, pp. 1--5.

\bibitem{Matthaiou:COMMag:2021}
M.~Matthaiou, O.~Yurduseven, H.~Q. Ngo, D.~Morales-Jimenez, S.~L. Cotton, and V.~F. Fusco, ``The road to {6G: Ten} physical layer challenges for communications engineers,'' \emph{{IEEE} Commun. Mag.}, vol.~59, no.~1, pp. 64--69, Jan. 2021.

\bibitem{interdonato2019ubiquitous}
G.~Interdonato, E.~Bj{\"o}rnson, H.~Q.~Ngo, P.~Frenger, and E.~G. Larsson, ``Ubiquitous cell-free massive {MIMO} communications,'' \emph{EURASIP J. Wireless Commun. Netw}, vol. 2019, no.~1, pp. 1--13, Dec. 2019.

\bibitem{Zhang:JSAC:2020}
J.~Zhang, E.~Björnson, M.~Matthaiou, D.~W.~K. Ng, H.~Yang, and D.~J. Love, ``Prospective multiple antenna technologies for beyond {5G},'' \emph{{IEEE} J. Sel. Areas Commun.}, vol.~38, no.~8, pp. 1637--1660, Aug. 2020.

\bibitem{zahra:2023:globecm}
Z.~Mobini, H.~Q. Ngo, M.~Matthaiou, and L.~Hanzo, ``Cell-free massive {MIMO} surveillance of multiple untrusted communication links,'' \emph{{IEEE} Internet Things J.}, pp. 1--1, 2024.

\bibitem{Kapetanovic:COMMAg:2015}
D.~Kapetanovic, G.~Zheng, and F.~Rusek, ``Physical layer security for massive {MIMO: An} overview on passive eavesdropping and active attacks,'' \emph{{IEEE} Commun. Mag.}, vol.~53, no.~6, pp. 21--27, Jun. 2015.

\bibitem{Zahra:TWC:2019}
Z.~Mobini, M.~Mohammadi, and C.~Tellambura, ``Wireless-powered full-duplex relay and friendly jamming for secure cooperative communications,'' \emph{{IEEE} Trans. Inf. Forensics Security}, vol.~14, no.~3, pp. 621--634, Mar. 2019.

\bibitem{Zhou:TWC:2012}
X.~Zhou, B.~Maham, and A.~Hj{\o}rungnes, ``Pilot contamination for active eavesdropping,'' \emph{{IEEE} Trans. Wireless Commun.}, vol.~11, no.~3, pp. 903--907, Mar. 2012.

\bibitem{Xiong:TIFS:2015}
Q.~Xiong, Y.-C. Liang, K.~H. Li, and Y.~Gong, ``An energy-ratio-based approach for detecting pilot spoofing attack in multiple-antenna systems,'' \emph{{IEEE} Trans. Inf. Forensics Security}, vol.~10, no.~5, pp. 932--940, May 2015.

\bibitem{Xie:ICC:2017}
J.~Xie, Y.-C. Liang, J.~Fang, and X.~Kang, ``Two-stage uplink training for pilot spoofing attack detection and secure transmission,'' in \emph{Proc. IEEE ICC}, May 2017, pp. 1--6.

\bibitem{Zhang:Acess:2019}
X.~Zhang, D.~Guo, K.~An, Z.~Ding, and B.~Zhang, ``Secrecy analysis and active pilot spoofing attack detection for multigroup multicasting cell-free massive {MIMO} systems,'' \emph{IEEE Access}, vol.~7, pp. 57\,332--57\,340, Apr. 2019.

\bibitem{Li:systemJ:2023}
N.~Li, Y.~Gao, K.~Xu, M.~Guo, N.~Sha, X.~Wang, and G.~Wang, ``Spatial sparsity-based pilot attack detection and transmission countermeasure for cell-free massive {MIMO} system,'' \emph{{IEEE} Syst. J.}, vol.~17, no.~2, pp. 2065--2076, Jun. 2023.

\bibitem{Fan:TCOM:2019}
Y.~Fan, X.~Wang, and X.~Liao, ``On the secure degrees of freedom for two-user {MIMO} interference channel with a cooperative jammer,'' \emph{{IEEE} Trans. Commun.}, vol.~67, no.~8, pp. 5390--5402, Aug. 2019.

\bibitem{zhu:TWC:2014}
J.~Zhu, R.~Schober, and V.~K. Bhargava, ``Secure transmission in multicell massive {MIMO} systems,'' \emph{{IEEE} Trans. Wireless Commun.}, vol.~13, no.~9, pp. 4766--4781, Sep. 2014.

\bibitem{Wu:TIFS:2016}
Y.~Wu, R.~Schober, D.~W.~K. Ng, C.~Xiao, and G.~Caire, ``Secure massive {MIMO} transmission with an active eavesdropper,'' \emph{{IEEE} Trans. Inf. Forensics Security}, vol.~62, no.~7, pp. 3880--3900, Jul. 2016.

\bibitem{Nguyen:JSAC:2018}
N.-P. Nguyen, H.~Q. Ngo, T.~Q. Duong, H.~D. Tuan, and K.~Tourki, ``Secure massive {MIMO} with the artificial noise-aided downlink training,'' \emph{{IEEE} J. Sel. Areas Commun.}, vol.~36, no.~4, pp. 802--816, Apr. 2018.

\bibitem{Xu:TVT:2021}
W.~Xu, B.~Li, L.~Tao, and W.~Xiang, ``Artificial noise assisted secure transmission for uplink of massive {MIMO} systems,'' \emph{{IEEE} Trans. Veh. Technol.}, vol.~70, no.~7, pp. 6750--6762, Jul. 2021.

\bibitem{Chen:TIFS:2016}
J.~Chen, X.~Chen, W.~H. Gerstacker, and D.~W.~K. Ng, ``Resource allocation for a massive {MIMO} relay aided secure communication,'' \emph{{IEEE} Trans. Inf. Forensics Security}, vol.~11, no.~8, pp. 1700--1711, Aug. 2016.

\bibitem{Guo:TWC:2016}
K.~Guo, Y.~Guo, and G.~Ascheid, ``Security-constrained power allocation in {MU}-massive-{MIMO} with distributed antennas,'' \emph{{IEEE} Trans. Wireless Commun.}, vol.~15, no.~12, pp. 8139--8153, Dec. 2016.

\bibitem{Li:TVT:2018}
M.~Li, G.~Ti, and Q.~Liu, ``Secure beamformer designs in {MU-MIMO} systems with multiuser interference exploitation,'' \emph{{IEEE} Trans. Veh. Technol.}, vol.~67, no.~9, pp. 8288--8301, Sep. 2018.

\bibitem{Zhu:JSAC:2018}
F.~Zhu, F.~Gao, H.~Lin, S.~Jin, J.~Zhao, and G.~Qian, ``Robust beamforming for physical layer security in {BDMA} massive {MIMO},'' \emph{{IEEE} J. Sel. Areas Commun.}, vol.~36, no.~4, pp. 775--787, Apr. 2018.

\bibitem{Lin:WCL:2021}
Z.~Lin, M.~Lin, B.~Champagne, W.-P. Zhu, and N.~Al-Dhahir, ``Secure and energy efficient transmission for {RSMA}-based cognitive satellite-terrestrial networks,'' \emph{{IEEE} Wireless Commun. Lett.}, vol.~10, no.~2, pp. 251--255, Feb. 2021.

\bibitem{Timilsina:GC:2018}
S.~Timilsina, D.~Kudathanthirige, and G.~Amarasuriya, ``Physical layer security in cell-free massive {MIMO},'' in \emph{Proc. IEEE GLOBECOM}, Dec. 2018, pp. 1--7.

\bibitem{Hoang:TCOM:2018}
T.~M. Hoang, H.~Q. Ngo, T.~Q. Duong, H.~D. Tuan, and A.~Marshall, ``Cell-free massive {MIMO} networks: Optimal power control against active eavesdropping,'' \emph{{IEEE} Trans. Commun.}, vol.~66, no.~10, pp. 4724--4737, Oct. 2018.

\bibitem{Zhang:SYS.2020}
X.~Zhang, D.~Guo, K.~An, and B.~Zhang, ``Secure communications over cell-free massive {MIMO} networks with hardware impairments,'' \emph{{IEEE} Syst. J.}, vol.~14, no.~2, pp. 1909--1920, Jun. 2020.

\bibitem{Alageli:TIFS:2020}
M.~Alageli, A.~Ikhlef, F.~Alsifiany, M.~A.~M. Abdullah, G.~Chen, and J.~Chambers, ``Optimal downlink transmission for cell-free {SWIPT} massive {MIMO} systems with active eavesdropping,'' \emph{{IEEE} Trans. Inf. Forensics Security}, vol.~15, pp. 1983--1998, Nov. 2019.

\bibitem{Salah:Globecom:2020}
S.~Elhoushy and W.~Hamouda, ``Nearest {APs}-based downlink pilot transmission for high secrecy rates in cell-free massive {MIMO},'' in \emph{Proc. IEEE GLOBECOM}, Dec. 2020, pp. 1--6.

\bibitem{Zhang:TVT:2010}
X.~Zhang, T.~Liang, K.~An, G.~Zheng, and S.~Chatzinotas, ``Secure transmission in cell-free massive {MIMO} with {RF} impairments and low-resolution {ADCs/DACs},'' \emph{{IEEE} Trans. Veh. Technol.}, vol.~70, no.~9, pp. 8937--8949, Sep. 2021.

\bibitem{Zhang:TCOM:2022}
Y.~Zhang, W.~Xia, G.~Zheng, H.~Zhao, L.~Yang, and H.~Zhu, ``Secure transmission in cell-free massive {MIMO} with low-resolution {DACs} over {Rician} fading channels,'' \emph{{IEEE} Trans. Commun.}, vol.~70, no.~4, pp. 2606--2621, Apr. 2022.

\bibitem{Hien:Asilomar:2018}
H.~Q. Ngo, H.~Tataria, M.~Matthaiou, S.~Jin, and E.~G. Larsson, ``On the performance of cell-free massive {MIMO} in {Ricean} fading,'' in \emph{Proc. IEEE Asilomar Conf. Signals, Systems, and Computers}, Nov. 2018, pp. 980--984.

\bibitem{Ammar:GLOBSIP:2019}
H.~A. Ammar and R.~Adve, ``Power delay profile in coordinated distributed networks: {User}-centric v/s disjoint clustering,'' in \emph{Proc. IEEE GlobalSIP}, Nov. 2019, pp. 1--5.

\bibitem{Buzzi:TWC:2020}
S.~Buzzi, C.~D’Andrea, A.~Zappone, and C.~D’Elia, ``User-centric {5G} cellular networks: {Resource} allocation and comparison with the cell-free massive {MIMO} approach,'' \emph{{IEEE} Trans. Wireless Commun.}, vol.~19, no.~2, pp. 1250--1264, Feb. 2020.

\bibitem{Chen:JSAC:2021}
S.~Chen, J.~Zhang, E.~Björnson, J.~Zhang, and B.~Ai, ``Structured massive access for scalable cell-free massive {MIMO} systems,'' \emph{{IEEE} J. Sel. Areas Commun.}, vol.~39, no.~4, pp. 1086--1100, Apr. 2021.

\bibitem{Emil:TWC:2020}
G.~Interdonato, M.~Karlsson, E.~Björnson, and E.~G. Larsson, ``Local partial zero-forcing precoding for cell-free massive {MIMO},'' \emph{{IEEE} Trans. Wireless Commun.}, vol.~19, no.~7, pp. 4758--4774, Jul. 2020.

\bibitem{Marzetta:Cambridge:2016}
T.~Marzetta, E.~Larsson, Y.~Hong, and H.~Ngo, \emph{Fundamentals of Massive MIMO}.\hskip 1em plus 0.5em minus 0.4em\relax Cambridge, U.K.: Cambridge Univ. Press, 2016.

\bibitem{Mohammadi:ICST:2023}
M.~Mohammadi, Z.~Mobini, D.~Galappaththige, and C.~Tellambura, ``A comprehensive survey on full-duplex communication: Current solutions, future trends, and open issues,'' \emph{{IEEE} Commun. Surveys Tuts.}, vol.~25, no.~4, pp. 2190--2244, Fourthquarter 2023.

\bibitem{Miller:Mob:2012}
R.~Miller and W.~Trappe, ``On the vulnerabilities of $csi$ in $mimo$ wireless communication systems,'' \emph{IEEE Trans. Mob. Comput.}, vol.~11, no.~8, pp. 1386--1398, Aug. 2012.

\bibitem{Zhang:JSTSP:2014}
Q.~Zhang, S.~Jin, K.-K. Wong, H.~Zhu, and M.~Matthaiou, ``Power scaling of uplink massive {MIMO} systems with arbitrary-rank channel means,'' \emph{{IEEE} J. Sel. Topics Signal Process.}, vol.~8, no.~5, pp. 966--981, Oct. 2014.

\bibitem{Nasir:2017:TSP}
A.~A. Nasir, H.~D. Tuan, T.~Q. Duong, and H.~V. Poor, ``Secrecy rate beamforming for multicell networks with information and energy harvesting,'' \emph{{IEEE} Trans. Signal Process.}, vol.~65, no.~3, pp. 677--689, Feb. 2017.

\bibitem{Tam:TWC:2017}
H.~H.~M. Tam, H.~D. Tuan, D.~T. Ngo, T.~Q. Duong, and H.~V. Poor, ``Joint load balancing and interference management for small-cell heterogeneous networks with limited backhaul capacity,'' \emph{{IEEE} Trans. Wireless Commun.}, vol.~16, no.~2, pp. 872--884, Feb. 2017.

\bibitem{Qiu:Access:2018}
X.~Qiu, T.~Jiang, S.~Wu, and M.~Hayes, ``Physical layer authentication enhancement using a gaussian mixture model,'' \emph{IEEE Access}, vol.~6, pp. 53\,583--53\,592, Sep. 2018.

\bibitem{Tian:2017:ACCESS}
X.~Tian, M.~Li, and Q.~Liu, ``Random-training-assisted pilot spoofing detection and security enhancement,'' \emph{IEEE Access}, vol.~5, pp. 27\,384--27\,399, Dec. 2017.

\bibitem{Xiong:TIFS:2016}
Q.~Xiong, Y.-C. Liang, K.~H. Li, Y.~Gong, and S.~Han, ``Secure transmission against pilot spoofing attack: A two-way training-based scheme,'' \emph{{IEEE} Trans. Inf. Forensics Security}, vol.~11, no.~5, pp. 1017--1026, May 2016.

\bibitem{Im:TWC:2015}
S.~Im, H.~Jeon, J.~Choi, and J.~Ha, ``Secret key agreement with large antenna arrays under the pilot contamination attack,'' \emph{{IEEE} Trans. Wireless Commun.}, vol.~14, no.~12, pp. 6579--6594, Dec. 2015.

\bibitem{Liu:Access:2021}
X.~Liu, Y.~Tao, C.~Zhao, and Z.~Sun, ``Detect pilot spoofing attack for intelligent reflecting surface assisted systems,'' \emph{IEEE Access}, vol.~9, pp. 19\,228--19\,237, Feb. 2021.

\bibitem{Xu:icc:2019}
W.~Xu, S.~Xu, and B.~Li, ``Detection of pilot spoofing attack in massive {MIMO} systems,'' in \emph{Proc. IEEE ICC}, Jul. 2019, pp. 1--6.

\bibitem{Kapetanović:2013:PIMRC}
D.~Kapetanović, G.~Zheng, K.-K. Wong, and B.~Ottersten, ``Detection of pilot contamination attack using random training and massive {MIMO},'' in \emph{Proc. IEEE PIMRC}, Sep. 2013, pp. 13--18.

\bibitem{QUADRIGA}
S.~Jaeckel, L.~Raschkowski, K.~Börner, L.~Thiele, F.~Burkhardt, and E.~Eberlein, ``{Q}ua{DR}i{G}a - quasi deterministic radio channel generator, user manual and documentation,'' Fraunhofer Heinrich Hertz Institute., Tech. Rep., 2016.

\end{thebibliography}
\end{document}